\newtheorem{lemma}{Lemma}
\newtheorem{corr}{Corollary}
\newtheorem{theorem}{Theorem}
\newcommand{\degree}{\mathrm{degree}}
\newcommand{\G}{\mathrm{G}}
\newcommand{\V}{\mathrm{V}}
\newcommand{\E}{\mathrm{E}}
\newcommand{\ID}{\mathrm{ID}}
\newcommand{\Xh}{\emph{Xheal\ }}
\begin{document}
%
\clubpenalty=10000 
\widowpenalty = 10000


\title{Xheal: Localized Self-healing using Expanders \thanks{A shorter version of this paper published at PODC 2011, San Jose, CA.}}

\author{Gopal Pandurangan \thanks{Division of Mathematical
Sciences, Nanyang Technological University, Singapore 637371 and Department of Computer Science, Brown University, Providence, RI 02912.  \hbox{E-mail}:~{\tt gopalpandurangan@gmail.com}. Supported in part by NSF grant CCF-1023166 and by a grant from the United States-Israel Binational Science Foundation (BSF).}   \and Amitabh Trehan \thanks{Faculty of Industrial Engineering and Management, Technion - Israel Institute of Technology, Haifa, Israel - 32000. \hbox{E-mail}:~{\tt amitabh.trehaan@gmail.com.} 
Work done partly at  Brown University and University of Victoria. Supported in part at the Technion by a fellowship of the Israel Council for Higher Education.
}}

\maketitle
\begin{abstract}
We consider the problem of self-healing in reconfigurable networks (e.g. peer-to-peer and wireless mesh networks) that are under repeated attack by an omniscient adversary and propose a fully distributed algorithm, \Xh, that maintains good expansion and spectral properties of the network, also keeping the network connected. Moreover, \Xh, does this while allowing only low stretch and degree increase per node. The algorithm heals global properties like expansion and stretch while only doing local changes and using only local information.
 We use a model similar to that used in recent work on self-healing.  In our model, over a sequence of rounds, an adversary either inserts a node with arbitrary  connections or deletes an arbitrary node from the network. The network responds  by quick ``repairs," which consist of adding or deleting edges in an efficient localized manner.

These repairs  preserve the edge expansion, spectral gap, and network stretch, after adversarial deletions, without increasing node degrees by too much, in the following sense.   At any point in the algorithm, the expansion of the graph will be either `better' than the expansion of  the graph formed by considering only the adversarial insertions (not the adversarial deletions) or the expansion will be, at least, a constant.   Also, the stretch i.e.  the distance between any pair of  nodes in the healed graph is  no more than a $O(\log n)$ factor.  Similarly, at any point, a node $v$ whose degree would have been $d$ in the graph with adversarial insertions only, will have degree at most $O(\kappa d)$ in the actual graph, for a small parameter $\kappa$.  We also  provide bounds on the second smallest eigenvalue of the Laplacian which captures key properties such
as mixing time, conductance, congestion in routing etc. Our distributed data structure  has low amortized latency and bandwidth requirements.
Our work improves over the self-healing algorithms \emph{Forgiving tree} [PODC 2008] and \emph{Forgiving graph} [PODC 2009]  in that we are able to give   guarantees on degree and stretch, while at the same time preserving the expansion and spectral properties of the network.

\end{abstract}




\section{Introduction}

Networks in the modern age have grown to such an extent that they have now begun to resemble self-governing living entities. Centralized control and management of resources has become increasingly untenable. Distributed and  localized attainment of self-* properties is fast becoming the need of the hour. 

As we have seen the baby Internet grow through its adolescence into a strapping teenager, we have experienced and are experiencing many of its growth pangs and tantrums. There have been recent disruption of services in networks such as Google, Twitter, Facebook and Skype.  On August 15, 2007 the Skype network crashed for about $48$ hours, disrupting service to approximately $200$ million users~\cite{fisher,malik, moore, ray, stone}.  Skype attributed this outage to failures in their ``self-healing mechanisms''~\cite{garvey}.  We believe that this outage is indicative of  the  unprecedented complexity of modern computer systems: we are approaching scales of billions of components.  Unfortunately, current algorithms ensure robustness in computer networks through the increasingly unscalable approach of hardening individual components or, at best, adding lots of redundant components. Such designs are increasingly unviable. No living organism is designed such that no component of it ever fails: there are simply too many components. For example, skin can be cut and still heal. It is much more practical to  design skin that can heal than a skin that is completely impervious to attack. 

This paper adopts a \emph{responsive} approach, in the sense that it responds to an attack (or component failure) by changing the topology of the network.  This  approach works irrespective of the initial state of the network, and is thus orthogonal and complementary to traditional  non-responsive techniques. This approach requires the network to be \emph{reconfigurable}, in the sense that the topology of the network can be changed. 
Many important networks are \emph{reconfigurable}. Many of these  we have designed e.g.  peer-to-peer, wireless mesh and ad-hoc computer networks, and  infrastructure networks, such as an airline's transportation network. Many have existed since long but we have only now closely scrutinized them e.g. social networks such as friendship networks on social networking sites, and  biological networks, including the human brain. Most of them are also dynamic, due to the capacity of  individual nodes to initiate new connections or drop existing connections.

 In this setting, our paper seeks to address the important and challenging problem of efficiently and responsively maintaining global invariants in a localized, distributed manner.  It is obvious that it is a significant challenge to come up with approaches to optimize various properties at the same time, especially with only local knowledge. For example, a star topology achieves the lowest distance between nodes, but the central node has the highest degree. If we were trying to give the lowest degrees to the nodes in a connected graph, they would be connected in a line/cycle giving the maximum possible diameter. Tree structures give a good compromise between degree increase and distances, but may lead to poor spectral properties (expansion) and poor load balancing. 
Our main contribution is a self-healing algorithm \Xh that maintains  spectral properties (expansion), connectivity, and stretch in a distributed manner using only localized information and actions, while allowing only a small degree increase per node. 
  Our main algorithm is described in Section~\ref{sec:algorithm}.


%

\medskip
\noindent {\bf Our Model:}
Our model, which is similar to the model introduced in  \cite{HayesPODC09, Amitabh-2010-PhdThesis}, is briefly described here.  We assume that the network is initially a connected (undirected, simple) graph over $n$ nodes.  An adversary repeatedly attacks the
network. This adversary knows the network topology and our algorithm, and it has the ability to delete arbitrary nodes from the  network or insert a new node in the system which it can connect to any subset of  nodes currently in the system.  
However, we assume the adversary is constrained in that in any time step it can only delete or insert a single node. (Our algorithm can be extended to handle
multiple insertions/deletions.) The detailed model is described in Section~\ref{sec: Xmodel}.

\medskip
\noindent {\bf Our Results:} 
  For a reconfigurable network (e.g., peer-to-peer, wireless mesh networks)  that has both insertions and deletions, let $G'$ be the graph consisting of the original nodes and inserted nodes without any changes due to deletions. Let $n$ be the number of nodes in $G'$, and $G$ be the present (healed) graph. Our main result is a new algorithm \Xh that ensures (cf. Theorem \ref{th:main} in Section \ref{sec: Results}): 
 1) \emph{Spectral Properties:} If $G'$ has expansion equal or better than a constant,  \Xh achieves at least a constant expansion, else it maintains at least the same expansion as $G'$;  Furthermore, we show bounds on the second smallest eigenvalue of the Laplacian of $G$, $\lambda(G)$ with respect to
 the corresponding $\lambda(G')$. An important special case of our result
 is that if $G'$ is an (bounded degree) expander, then Xheal guarantees that $G$ is also an (bounded degree) expander.  We note that such a guarantee is not provided by the self-healing algorithms of \cite{HayesPODC09, HayesPODC08}.
   2)\emph{Stretch:} The distance between any two nodes of the actual network never increases by more than $O(\log n)$ times their distance in $G'$; and 3) the degree of any node never increases by more than $\kappa$ times its degree in $G'$, where $\kappa$ is a small parameter (which is implementation dependent, can be chosen to be a constant --- cf. Section \ref{sec: distributed}).

    Our algorithm is  distributed, localized and resource efficient.  
We introduce the main algorithm separately (Section~\ref{sec:algorithm}) and a distributed implementation (Section~\ref{sec: distributed}). The high-level idea
behind our algorithm is to put a $\kappa$-regular expander between the deleted node and its neighbors. Since this expander has low degree and constant expansion, intuitively this helps in maintaining good expansion. However, a key
complication in this intuitive approach is efficient implementation while maintaining bounds on degree and stretch.  The $\kappa$ parameter above is determined by the particular distributed implementation of an expander that we use. Our construction is randomized which guarantees efficient maintenance
of an expander under insertion and deletion, albeit at the cost of a small probability that the graph may not be an expander. This aspect of our implementation can be improved if one can design efficient distributed constructions that yield expanders deterministically. (To the best of our knowledge no such construction is known).
 In our  implementation, for a deletion, repair takes  $O(\log n)$ rounds and has  amortized complexity that is within $O(\kappa\log n)$ times the best possible. The formal statement and proof of these results are in Sections~\ref{sec: Results} and \ref{sec: distributed}. 


\noindent {\bf Related Work:} 
The work most closely related to ours is~\cite{HayesPODC09, Amitabh-2010-PhdThesis}, which introduces a distributed data structure \emph{Forgiving Graph} that, in a model similar to ours, maintains low stretch of a network with constant multiplicative degree increase per node.  However, \Xh is  more ambitious in that it not only maintains similar properties but also the spectral properties (expansion) with obvious benefits, and also uses different techniques. However, we pay with larger message sizes and amortized analysis of costs. The works of \cite{HayesPODC09, Amitabh-2010-PhdThesis} themselves use models or techniques from earlier work~\cite{Amitabh-2010-PhdThesis, HayesPODC08, SaiaTrehanIPDPS08, BomanSAS06}. They put in tree like structures of nodes in place of the deleted node. Methods which put in  tree like sructures of nodes are likely to be bad for expansion. If the original network is a star of $n+1$ nodes and the central node gets deleted, the repair algorithm puts in a tree, pulling the expansion down from a constant to $O(1/n)$.  Even the algorithms {\em Forgiving tree} \cite{HayesPODC08} and {\em Forgiving graph}~\cite{HayesPODC09}, which put in a tree of virtual nodes (simulated by real nodes) in place of a deleted node don't improve the situation.  In these algorithms, even though the real network is an isomorphism of the virtual network, the `binary search' properties of the virtual trees ensure a poor cut involving the root of the trees.

The importance of spectral properties is well known~\cite{ chungbook, Wigderson-exsurvey}. Many results are based on graphs having enough expansion or conductance, including recent results in distributed computing in information spreading etc.~\cite{HillelPODC10}.  There are only a few papers showing distributed construction of expander graphs~\cite{lawsiu, spanders, mihail-p2p};  Law and Siu's construction gives  expanders with high probability using Hamilton cycles which we use in our implementation.

 Many papers have  discussed strategies for adding additional capacity or rerouting in
anticipation of failures~\cite{AwerbuchAdapt92, doverspike94capacity, frisanco97capacity, iraschko98capacity, murakami97comparative, caenegem97capacity, xiong99restore}.  Some other results  are also responsive  in some sense: \cite{medard99redundant, anderson01RON} or  have enough built-in redundancy in separate components ~\cite{goel04resilient}, but all of them have fixed network topologies.  Our approach does not dictate routing paths or require initially placed redundant components.  There is also some research in the physics community on preventing cascading failures  which empirically works well  but  unfortunately performs very poorly under adversarial attack~\cite{holme-2002-65, motter-2002-66, motter-2004-93,hayashi2005}.

%
\subsection{Preliminaries}
 \textbf{Edge Expansion:} 
Let $\G =(\V,\E)$ be an undirected graph and $S \subset \V$ be a set of nodes.
We denote $ \overline{S} = V - S$. Let $|\E|_{S, \overline{S}} = \{ (u,v) \in \E | u \in S, v \in \overline{S} \}$ be the number of edges crossing the cut $(S, \overline{S})$.  We define the {\em volume} of $S$ to be the sum of the degrees
of the vertices in $S$ as $vol(S) = \sum_{x \in S}degree(x)$.
 The edge expansion of the graph $h_G$ is defined as, 
 $
  h_G =  {\mathrm{min}_{ |S| \le |\V|/2}}  \frac{|\E|_{S,\overline{S}}}{|S|}   
 $. 

\noindent  \textbf{Cheeger constant:}  A related notion is the Cheeger constant $\phi_G$ of a graph (also called
 as {\em conductance}) defined as follows \cite{chungbook}:
 $\phi_G = {\mathrm{min}_{|S|}} \frac{|\E|_{S,\overline{S}}}
 {min(vol(S), vol(\overline{S}))}.$
 
 The Cheeger constant can be more appropriate for graphs which are very non-regular, since the denominator takes into account the sum of the degrees of  vertices in $S$, rather than just the size of $S$. Note for $k-$regular graphs,
 the Cheeger constant is just the  edge expansion divided by $k$, hence they
 are essentially equivalent for regular graphs. However, in general graphs, 
 key properties such as  mixing time, congestion in routing etc\. are captured more accurately by the Cheeger constant, rather than edge expansion. For example, consider a constant
 degree expander of $n$ nodes and partition the vertex set into two equal parts. Make each of the parts a clique. This graph has expansion at least a constant, but its conductance is $O(1/n)$. Thus while the expander has logarithmic mixing time, the modified graph has polynomial mixing time.
 
 The Cheeger constant is closely related to the the second-smallest eigenvalue of the  Laplacian matrix denoted
by $\lambda_G$ (also called the ``algebraic connectivity" of the graph).
Hence $\lambda_G$, like the Cheeger constant, captures many key ``global" properties of the graph \cite{chungbook}. $\lambda_G$ 
captures how ``well-connected" the graph is and is strictly greater than 0 (which is always the smallest eigenvalue) if and only if the graph is connected.
For an expander graph, it is a constant (bounded away from zero). The larger $\lambda_G$ is, larger is the expansion.

\begin{theorem}\textsc{Cheeger inequality}\cite{chungbook}
$2\phi_G \geq \lambda_G  > \phi_G^2/2$
\end{theorem}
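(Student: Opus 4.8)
The plan is to prove the two inequalities separately, both resting on the variational (Courant--Fischer) characterization of $\lambda_G$. Write $D$ for the diagonal degree matrix, $A$ for the adjacency matrix, $d_x = degree(x)$, and let $\mathcal{L} = D^{-1/2}(D-A)D^{-1/2}$ be the normalized Laplacian whose second-smallest eigenvalue is $\lambda_G$ (this is the normalization of \cite{chungbook} under which the Cheeger constant $\phi_G$, defined with volumes, is the natural companion). The first thing I would record is that, after the substitution $g = D^{-1/2} f$, the Rayleigh quotient becomes
\[
\lambda_G = \min_{g \,:\, \sum_x g(x)\,d_x = 0} \frac{\sum_{(u,v)\in \E}(g(u)-g(v))^2}{\sum_x g(x)^2\,d_x}.
\]
Every subsequent step manipulates this quotient.

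For the upper bound $\lambda_G \le 2\phi_G$ (the easy direction), I would substitute a single well-chosen test vector. Let $S$ be a cut attaining $\phi_G$, and set $g(x) = 1/vol(S)$ on $S$ and $g(x) = -1/vol(\overline{S})$ on $\overline{S}$. This $g$ is orthogonal to the all-ones vector in the degree-weighted inner product, hence admissible. A direct computation shows the numerator equals $|\E|_{S,\overline{S}}\,(1/vol(S)+1/vol(\overline{S}))^2$ and the denominator equals $1/vol(S)+1/vol(\overline{S})$, so the quotient collapses to $|\E|_{S,\overline{S}}\,(1/vol(S)+1/vol(\overline{S}))$. Bounding $1/vol(S)+1/vol(\overline{S}) \le 2/\min(vol(S),vol(\overline{S}))$ then gives $\lambda_G \le 2\phi_G$.

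The hard direction, $\lambda_G > \phi_G^2/2$ (equivalently $\phi_G < \sqrt{2\lambda_G}$), is where the real work lies, and I expect it to be the main obstacle. Here I would take $g$ to be the eigenfunction attaining $\lambda_G$, truncate it to its positive part on the side of smaller volume so that its support has volume at most $vol(\V)/2$, and analyze the family of threshold (``sweep'') cuts $S_t = \{x : g(x) > t\}$. The key maneuver is the Cauchy--Schwarz inequality
\[
\sum_{(u,v)\in \E}\bigl|g(u)^2-g(v)^2\bigr| \le \sqrt{\sum_{(u,v)\in\E}(g(u)-g(v))^2}\;\sqrt{\sum_{(u,v)\in\E}(g(u)+g(v))^2},
\]
whose first factor is controlled by $\lambda_G$ through the Rayleigh quotient and whose second factor is at most $\sqrt{2\sum_x g(x)^2 d_x}$ since $(g(u)+g(v))^2 \le 2(g(u)^2+g(v)^2)$. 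A coarea (layer-cake) argument rewrites the left-hand side as an integral over thresholds of $|\E|_{S_t,\overline{S_t}}$, and because each sweep cut satisfies $|\E|_{S_t,\overline{S_t}} \ge \phi_G\,vol(S_t)$, this integral is bounded below by $\phi_G$ times a matching quantity. Comparing the two estimates and cancelling forces some threshold cut to witness $\phi_G \le \sqrt{2\lambda_G}$.

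The delicate point, and the part I would write out most carefully, is the coarea/averaging step: one must restrict to the smaller-volume side so that every $S_t$ is a genuine cut whose min-volume side is its support, and one must verify that truncation does not increase the relevant Rayleigh quotient beyond $\lambda_G$. Keeping the direction of each inequality straight when passing from ``some threshold cut is good'' to a bound on $\phi_G$ itself is where errors creep in; everything else is routine algebra. As this is the classical Cheeger inequality, I would present the argument at this level and defer the fully worked constants to \cite{chungbook}.
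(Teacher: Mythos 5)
The paper does not prove this statement at all --- it is the classical Cheeger inequality, quoted verbatim with a citation to Chung's \emph{Spectral Graph Theory} --- so there is no in-paper argument to compare against. Your sketch (test vector $g = \mathbf{1}_S/vol(S) - \mathbf{1}_{\overline{S}}/vol(\overline{S})$ for the easy direction; truncated eigenfunction, Cauchy--Schwarz, and sweep cuts for the hard direction) is precisely the standard proof given in the cited reference and is correct in outline, with the only small caveat that the sweep argument as stated yields $\lambda_G \ge \phi_G^2/2$ and a touch more care (or Chung's sharper bound $\lambda_G > 1-\sqrt{1-\phi_G^2}$) is needed for the strict inequality.
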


\section{Node Insert, Delete, and Network Repair Model}
\label{sec: Xmodel}


\begin{figure}[h!]
\caption{The Node Insert, Delete and Network Repair Model -- Distributed View.}
\label{algo:model-2}
\begin{boxedminipage}{0.9\textwidth}
\begin{algorithmic}
\STATE Each node of $G_0$ is a processor.  
\STATE Each processor starts with a list of its neighbors in $G_0$.
\STATE Pre-processing: Processors may send messages to and from
their neighbors.
\FOR {$t := 1$ to $T$}
\STATE Adversary deletes or inserts a node $v_t$ from/into $G_{t-1}$, forming $U_t$.
\IF{node $v_t$ is inserted} 
\STATE The new neighbors of $v_t$ may update their information and send messages to and from
their neighbors.
\ENDIF
\IF{node $v_t$ is deleted} 
\STATE All neighbors of $v_t$ are informed of the deletion.
\STATE {\bf Recovery phase:}
\STATE Nodes of $U_t$ may communicate (synchronously, in parallel) 
with their immediate neighbors.  These messages are never lost or
corrupted, and may contain the names of other vertices.
\STATE During this phase, each node may insert edges
joining it to any other nodes as desired. 
Nodes may also drop edges from previous rounds if no longer required.
\ENDIF
\STATE At the end of this phase, we call the graph $G_t$.
\ENDFOR
\vspace{10pt}
\hrule
\STATE
\STATE {\bf Success metrics:} Minimize the following ``complexity'' measures:\\
Consider the graph  $G'_t$ which is the graph, at timestep $t$, consisting solely of the original nodes (from $G_0$) and insertions without regard to deletions and healings. 
\begin{enumerate}
\item{\bf Degree increase.}  $\max_{v \in G_t} \frac{\degree(v,G_t)}{ \degree(v,G'_t)}$
\item {\bf Edge expansion.} $h(G_t) \ge min(\alpha,\beta h(G'_t))$; for constants $ \alpha , \beta > 0$
\item {\bf Network stretch.} $\max_{x, y \in G_{t}} \frac{dist(x,y,G_{t})}{dist(x,y,G'_{t})}$, where, for a graph $G$ and nodes $x$ and $y$ in $G$, $dist(x,y,G)$ is the
length of the shortest path between $x$ and $y$ in $G$.
\item{\bf Recovery time.} The maximum total time for a recovery round,
assuming it takes a message no more than $1$ time unit to traverse any edge and we have unlimited local computational power at each node. We  assume the
LOCAL message-passing model, i.e., there is no bound on the size of the message
that can pass through an edge in a time step.
\item{\bf Communication complexity.} Amortized number of messages used for recovery.

\end{enumerate}
\end{algorithmic}
\end{boxedminipage}
\end{figure}


This model is based on the one introduced in~\cite{HayesPODC09, Amitabh-2010-PhdThesis}. Somewhat similar models were also used in~\cite{HayesPODC08, SaiaTrehanIPDPS08}.
We now describe the details.  Let $G = G_0$ be an arbitrary graph on $n$ nodes, which represent processors in a distributed network.  In each step, the adversary either adds a node or deletes a node.  After each deletion, the algorithm gets to add some new edges to the graph, as well as deleting old ones.  At each insertion, the processors follow a protocol to update their information.
The algorithm's goal is to maintain connectivity in the network, while maintaining good expansion properties and  keeping the distance between the nodes small.  At the same time, the algorithm wants to minimize the resources spent on this task, including keeping node degree small.  We assume that although
the adversary has full knowledge of the topology at every step and can add or delete any node it wants, it is oblivious to the random choices made by the self-healing algorithm as well as to the communication that takes place between the nodes
(in other words, we assume private channels between nodes).


Initially, each processor only knows its neighbors in $G_0$, and is unaware of the structure of the rest of $G_0$.
After each deletion or insertion, only the neighbors of the deleted or inserted vertex are informed that
the deletion or insertion has occurred. After this, processors are allowed to communicate (synchronously) by sending a limited number
of messages to their direct  neighbors.  We assume that these messages are always sent and received successfully.  The
processors may also request new edges be added to the graph.
We assume that no
other  vertex is deleted or inserted until the end of this round of computation and communication has concluded.

We also allow a certain amount of pre-processing to be done before the first attack occurs. In particular, we assume that all nodes have access to some amount of local information. For example, we assume that all nodes know the address
of all the neighbors of its neighbors (NoN). More generally, we assume the (synchronous) \cal{LOCAL} computation model \cite{peleg} for our analysis.
This is a well studied distributed computing model and has been used to study numerous ``local" problems  such as coloring, dominating set, vertex cover etc. \cite{peleg}.
This model allows arbitrary sized messages to go through an edge per time step.
In this model the NoN information can be exchanged in $O(1)$ rounds.

 Our goal is to minimize the time (the number of rounds) and the (amortized) message complexity per deletion (insertion
doesn't require any work from the self-healing algorithm).
 Our model is summarized  in Figure~\ref{algo:model-2}.

\section{The algorithm}
\label{sec:algorithm}

We give a high-level view of the distributed algorithm deferring the distributed implementation details for now (these will be described later in Section~\ref{sec: distributed}). 
The algorithm is summarized in Algorithm ~\ref{algo: repairbyexpander}.
To describe the algorithm, we associate a color  
 with each edge of the graph.  We will assume that the original edges of $\G$ and those added by the adversary are all colored {\bf black} initially. The algorithm can later recolor edges (i.e., to a color other than black --- throughout when we say ``colored" edge we mean a color other than black) as described below. If $(u,v)$ is a black (colored) edge, we say that $v$($u$) is a black (colored) neighbor of $u$($v$). Let $\kappa$ be a fixed parameter that is implementation dependent (cf. Section \ref{sec: distributed}). For the purposes of this algorithm, we assume the existence of a $\kappa$-regular expander with edge expansion $\alpha > 2$.

At any time step, the adversary can add a node (with its incident edges) or delete a node (with its incident edges). Addition is straightforward, the algorithm takes no action. The added edges are colored black.

The self-healing algorithm is mainly concerned with what edges to add and/or delete when
a node is deleted. The algorithm adds/deletes edges based on the colors of the edges deleted as well as on other factors as described below. Let $v$ be the deleted node and $NBR(v)$ be the neighbors of $v$ in the network after the current deletion.  We have the following cases:

\noindent {\bf Case 1:} {\em All the deleted edges are black edges.} In this case, we construct a $\kappa$-regular expander among the neighbor nodes $NBR(v)$ of the deleted node.  (If the number of neighbors is less than $\kappa$, then a clique (a complete graph) is constructed among these nodes.)
All the edges of this expander are colored by a unique color,  say $C_v$ (e.g., the $\ID$ of the deleted node can be chosen as the color, assuming that every node gets a unique ID whenever it is inserted to the network). Note that the addition of the expander edges is such
that multi-edges are not created. In other words, if (black) edge $(u,v)$ is already present, and the expander construction mandates the addition
of a (colored) edge between $(u,v)$ then this done by simply re-coloring the edge to color $C_v$. Thus our algorithm does not add multi-edges.

We call the expander subgraph constructed in this case among the nodes in $NBR(v)$ as a  {\em primary (expander) cloud} or simply {\em a primary cloud} and all the (colored) edges in the cloud are  called
primary edges. (The term ``cloud" is used to  capture the fact that the nodes involved are ``closeby", i.e., local to each other.) To identify the primary cloud (as opposed to a secondary, described later) we assume that all primary colors are different shades of color {\bf red}.

\noindent {\bf Case 2:} {\em At least some of the deleted edges are colored edges}. In this case, we have two subcases. 

\noindent {\bf Case 2.1:} {\em All the deleted colored edges are primary edges.}
Let the colored edges belong to the colors $C_1, C_2, \dots, C_j$.   This means that the deleted node $v$ belonged to $j$ primary clouds (see Figure \ref{fig: nodeinmanyprimary}). There will be $\kappa$ edges of each color class deleted, since $v$ would have degree $\kappa$ in each of the primary expander clouds. In case $v$ has black neighbors, then some black edges will also be deleted. Assume for sake of simplicity that there are no black
neighbors for now.  If they are present, they can be handled in the same manner
as described later.

In this subcase, we do two operations.
First, we fix each of the $j$ primary clouds. Each of these clouds lost a node and so the cloud is no longer a $\kappa$-regular expander. We reconstruct a new $\kappa$-regular expander in each of the primary clouds (among the remaining nodes of each cloud). (This reconstruction is done in an incremental fashion
for efficiency reasons --- cf. Section  \ref{sec: distributed}.) The color of the edges of the respective primary clouds are retained.
Second, we pick one {\em free} node, if available (free nodes are explained below), from each primary cloud (i.e., there will be $j$ such nodes picked, one from each primary cloud) and these nodes will be connected together via a (new) $\kappa$-regular expander. (Again if the
number of primary clouds involved are less than or equal $\kappa+1$ i.e., $j\leq \kappa+1$, then a clique will be constructed.) 
The edges of this expander will have a new (unique) color of its own. We call the expander subgraph constructed in this case among the $j$ nodes  as a  {\em secondary (expander) cloud} or simply a {\em secondary cloud} and all the (colored) edges in the cloud are  called
{\em secondary} edges. To identify a secondary cloud, we assume that all secondary colors are different  shades of color {\bf orange}. 

If the deleted node $v$ has black neighbors, then they are treated similarly,
consider each of the neighbors as a singleton primary cloud and then proceed
as above.

\begin{figure}[h!]
\label{fig: nodeinmanyprimary}
\centering
\includegraphics[scale=0.8]{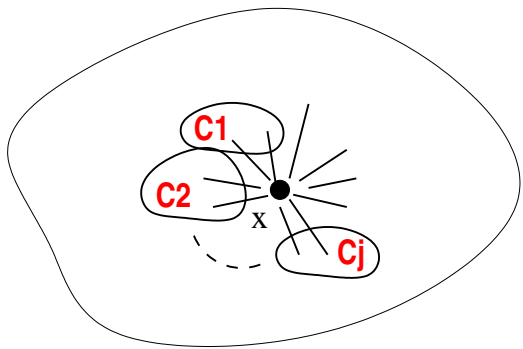}
\caption{A node can be part of many primary clouds.} 
\end{figure}

\noindent {\bf Free nodes and their choosing:}
 The nodes of the primary clouds picked  to form the secondary cloud are called {\em non-free} nodes.
Thus free nodes are nodes that belong to only primary clouds. 
We note that a free node can belong to more than one primary cloud (see e.g., Figure~\ref{fig: nodeinmanyprimary}). In the  above construction of  the secondary cloud, we choose one
unique free node from each cloud, i.e., if there are $j$ clouds then we choose
$j$ different nodes and associate each with one unique primary cloud (if a free node belongs to two or more primary clouds, we associate it with only one of them) such
that each primary cloud has exactly one free node associated with it. (How this is implemented is deferred to Section \ref{sec: distributed}.) 
We call the free node associated with a particular primary cloud as the {\em bridge node} that  ``connects" the primary cloud with the secondary cloud. Note that our construction implies that any (bridge) node of a primary cloud can
belong to at most one secondary cloud.

What if there are no free nodes associated with a primary cloud, say $C$?
Then we pick a free node (say $w$) from another cloud among the $j$ primary clouds (say $C'$) and {\em share} the node with the cloud $C$.
Sharing means adding $w$ to $C$ and forming a new $\kappa$-regular expander among the remaining nodes of  $C$ (including $w$).
Thus $w$ will be part of both $C$ and $C'$ clouds. $w$ will be used as a free node associated with $C$ for the subsequent repair. Note that this 
might render $C'$ devoid of free nodes. To compensate for this, $C'$ gets
a free node  (if available) from some other cloud (among the $j$ primary clouds). Thus, in effect, every cloud will have its own free node associated with it, if there are at least $j$ free nodes (totally) among the $j$ clouds. 

There is only one more possibility left to the discussed. If there are less
than $j$ free nodes among all the $j$ clouds, then we {\em combine} all the $j$ primary clouds into a {\em single} primary cloud, i.e., we construct a $\kappa$-regular expander among all
the nodes of the $j$ primary cloud (the previous edges belonging to the clouds are 
deleted). The edges of the new cloud will have a new (unique) color associated with it. Also  all non-free nodes  associated with the previous $j$ clouds
become free again in the  combined cloud. We note that combining many primary clouds into one primary cloud is
a costly operation (involves a lot of restructuring). We amortize this costly operation over many cheaper operations. This is the  main intuition behind constructing a secondary expander and free nodes; constructing a secondary expander is cheaper than combining many primary expanders and this is not possible only if there are no free nodes (which happens only once in a while).

\noindent {\bf Case 2.2:} {\em Some of the deleted edges are secondary edges.}
In other words, the deleted node, say $v$, will be a  bridge (non-free) node.
Let the deleted edges belong to the primary clouds $C_1, C_2, \dots, C_j$ and
the secondary cloud $F$.    (Our algorithm guarantees that a
bridge node can belong to at most  one secondary cloud.)
We handle this deletion as follows. Let $v$ be the bridge node
associated with the primary cloud $C_i$ (one among the $j$ clouds). Without loss of generality, let the secondary cloud connect a strict subset,
i.e., $j' < j$ primary clouds with possibly other (unaffected) primary clouds.  This case is shown in Figure~\ref{fig: case22sec}. As done in Case 2.1, we first fix all the $j$ primary clouds by constructing a new $\kappa$-regular expander among the remaining nodes. 
We then fix the secondary cloud by  finding another free node, say $z$, from $C_i$, and reconstructing  a new $\kappa$-regular secondary cloud expander on $z$ and other bridge nodes of  other primary clouds of $F$. The edges retain
the same color as their original. If there are no free nodes among all the primary clouds of $F$, then  all
primary clouds  of $F$ are combined into one new primary cloud as explained in Case 2.1 above (edges of $F$ are deleted). 
The remaining $j-j'$ primary clouds are then repaired as in case 2.1 by constructing a secondary cloud between them.

\begin{figure}[h!]
\label{fig: case22sec}
\centering
\includegraphics[scale=0.8]{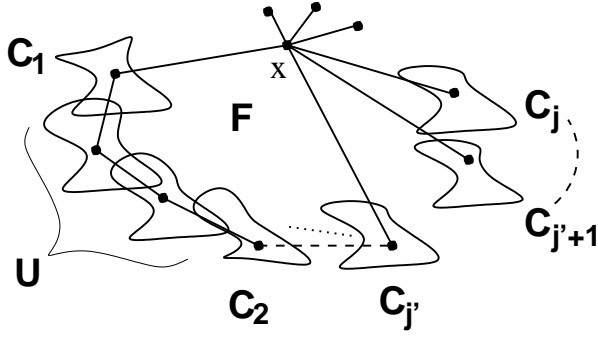}
\caption{Case 2.2: Deleted node x part of secondary cloud F, and primary clouds} 
\end{figure}


\begin{algorithm}[h!]
\label{algo: repairbyexpander}
\caption{\textsc{Xheal}($\G, \kappa$)}
\begin{algorithmic}[1]

 \IF{node $v$ inserted with  incident edges}
\STATE The inserted edges are colored black.
\ENDIF

\IF{node $v$ is deleted}
\IF{all deleted edges are black} 
\STATE   \textsc{MakeCloud}($BlackNbrs(v), primary, Clr_{new}$)
\ELSIF{deleted colored edges are all primary}
\STATE Let $C_1, \dots, C_j$ be primary clouds that lost an edge
\STATE  \textsc{FixPrimary}($[C_{1}, \dots, C_{j}]$)
\STATE \textsc{MakeSecondary}($[C_1, \dots, C_j] \cup BlackNbrs(v)$)
\ELSE
\STATE Let $[C_1, \dots, C_j] \leftarrow$ primary clouds of $v$; $F \leftarrow$ secondary cloud of $v$; $[U] \leftarrow$  $Clouds(F) \setminus [C_1, \dots, C_{j}]$, $[C_1, \dots, C_{j'}] \leftarrow F \cap [C_1, \dots, C_{j}]  $  
\STATE  \textsc{FixPrimary}($[C_{1}, \dots, C_{j}]$)
\STATE   \textsc{FixSecondary}($F, v$)
\STATE \textsc{MakeSecondary}($[C_{j' + 1}, \dots, C_j] \cup BlackNbrs(v)$)
\ENDIF
 \ENDIF
\end{algorithmic}
\end{algorithm}

\begin{algorithm}[h!]
\label{algo: makecloud}
\caption{\textsc{MakeCloud}($[V], Type, Clr$)}
\begin{algorithmic}[1]
\IF {$|V| \leq \kappa+1$}
\STATE Make clique among $[V]$
\ELSE
\STATE Make $\kappa$-reg expander among $[V]$ of edge $(Type, Clr)$
\ENDIF
\end{algorithmic}
\end{algorithm}

\begin{algorithm}[h!]
\label{algo: fixprimary}
\caption{\textsc{FixPrimary}($[C]$)}
\begin{algorithmic}[1]
\FOR{each cloud $C_i \in [C]$ }
\STATE  \textsc{MakeCloud}($C_i, primary, Color(C_i)$)   
\ENDFOR
\end{algorithmic}
\end{algorithm}

\begin{algorithm}[h!]
\label{algo: makesecondary}
\caption{\textsc{MakeSecondary}($[C]$)}
\begin{algorithmic}[1]
\FOR{each cloud $C_i \in [C]$ }
\IF{$FrNode_i$ = \textsc{PickFreeNode}($C_i$) == NULL} 
\STATE  \textsc{MakeCloud}($Nodes([C]), primary, Clr_{new}$)   
\STATE Return
\ENDIF
\ENDFOR
\STATE  \textsc{MakeCloud}($\bigcup FrNode_i\ \forall C_i \in [C]$, secondary,$Clr_{new}$)
\end{algorithmic}
\end{algorithm}

\begin{algorithm}[h!]
\label{algo: fixsecondary}
\caption{\textsc{FixSecondaryCloud}($F, v$)}
\begin{algorithmic}[1]
\IF{$v$ is a bridge node of $C_i$ in $F$} 
\IF{$FrNode_i$ = \textsc{PickFreeNode}($C_i$) == NULL} 
\STATE  \textsc{MakeCloud}($Nodes(F), primary, Clr_{new}$)   
\ELSE
\STATE  \textsc{MakeCloud}($FrNode_i \cup BridgeNode(C_j)\ \forall C_i \in [C]$, secondary, Color(F))    
\ENDIF
\ENDIF
\end{algorithmic}
\end{algorithm}

\begin{algorithm}[h!]
\label{algo: pickfreenode}
\caption{\textsc{PickFreeNode}()}
\begin{algorithmic}[1]

\STATE Let a Free node be a primary node without secondary duties
\IF{Free node in my cloud}
\STATE Return Free node
\ELSE
\STATE Ask neighbor clouds; if a free node found, return node, else return NULL
\ENDIF
\end{algorithmic}
\end{algorithm}

\section{Analysis of  \textbf{\it Xheal}}
\label{sec: Results}

The following is our main theorem on the guarantees that \Xh provides on the topological properties of the healed graph. The theorem assumes that \Xh is able to construct a $\kappa$-regular expander (deterministically) of expansion $\alpha > 2$.

\begin{theorem}
\label{th:main}
 For graph $G_t$(present graph) and graph $G'_t$ (of only original and inserted edges), at any time t, where a timestep is an insertion or deletion followed by healing:
 \begin{enumerate}
  \item For all $x \in G_t$, $degree_{G_t}(x) \le \kappa.degree_{G'_t}(y)$, for a fixed constant $\kappa> 0$.
  \item For any two nodes $u,v \in G_t$,  $\delta_{G_t}(u,v) \le \delta_{G'_t}(u,v) O(\log n)$, where $\delta(u,v)$ is the shortest path between $u$ and $v$,
  and $n$ is the number of nodes in $G_t$.
   \item $h(G_t) \ge min(\alpha, h(G'_t))$, for some fixed constant  $\alpha \ge 1$.
   \item $\lambda(G_t) \ge min\left(\Omega\left(\frac{\lambda(G'_t)^2 d_{min}(G'_t)}{(\kappa)^2(d_{max}(G'_t))^2}\right), \Omega\left(\frac{1}{(\kappa d_{max}(G'_t))^2}\right)\right)$, where $d_{min}(G'_t)$ and $d_{max}(G'_t)$ are the minimum
   and maximum degrees of $G'_t$.
    \end{enumerate}
\end{theorem}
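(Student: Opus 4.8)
My plan is to treat the four claims as a dependency chain rather than four independent arguments: the degree bound (1) controls vertex volumes and so feeds the spectral bound (4); the edge-expansion bound (3) is the technical heart and also feeds (4) through the Cheeger inequality; and the stretch bound (2) is an essentially self-contained consequence of the $O(\log n)$ diameter of expanders. Throughout I would fix a time $t$, write $D$ for the set of adversarially deleted nodes so that $V(G_t)=V(G'_t)\setminus D$, and carry each statement as an invariant preserved by every elementary operation (an insertion, or a deletion followed by the primary/secondary/combine repair of \Xh), using the standing assumption that each cloud is a $\kappa$-regular expander of edge expansion $\alpha>2$ (or a clique when it has at most $\kappa+1$ nodes).

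For (1) I would use a charging argument. In $G_t$ a node $x$ only carries edges of clouds it currently belongs to, since a black edge incident to $x$ is at worst recoloured, never duplicated (the algorithm re-colours rather than adds a multi-edge). Within each primary or secondary cloud $x$ has degree at most $\kappa$, so it suffices to bound the number of clouds containing $x$ by $O(\deg_{G'_t}(x))$. I would charge each primary-cloud membership of $x$ to a distinct black edge of $x$ in $G'_t$: a fresh primary membership arises only when a $G'$-neighbour of $x$ is deleted (Case 1) or when $x$ is pulled in as a shared free node, and I would argue, using the algorithm's guarantees that a bridge node lies in at most one secondary cloud and that \emph{combining} several primary clouds collapses their memberships into one, that these events amortise against $x$'s black edges. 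The delicate point is free-node sharing, where $x$ gains a membership not coming from adjacency; here I would show that the compensation rule (a cloud robbed of its free node is immediately given another) keeps the number of secondary duties, and hence the extra degree, within a constant factor of $\kappa\deg_{G'_t}(x)$.

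For (3), the main obstacle, I would argue inductively over timesteps, maintaining $h(G_s)\ge\min(\alpha,h(G'_s))$ for every $s\le t$. An insertion adds the same node and black edges to both $G_s$ and $G'_s$, and I would check this preserves the inequality; the substantive case is a deletion plus repair, which I would reduce to one primitive: \emph{delete a node $v$ and insert a $\kappa$-regular expander $X$ of edge expansion $\alpha$ on $NBR(v)$} (the secondary and combine steps only add or re-route expander edges and are treated the same way). For a cut $(S,\bar S)$ of the repaired graph I would lift it by assigning $v$ to the side holding the majority of $NBR(v)$; writing $T=NBR(v)\cap(\text{minority side})$ gives $|T|\le|NBR(v)|/2$, so $X$ alone supplies at least $\alpha|T|$ crossing edges, which—since $\alpha>2$—dominates the at most $|T|$ crossing edges $v$ contributed before deletion. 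Hence the cut value does not fall while $|S|$ changes by at most one, so expansion stays at least $\min(\alpha,h(\text{pre-repair}))$. Because $|S|\le|V(G_t)|/2\le|V(G'_t)|/2$, the set $S$ also sits on the small side of $G'_t$, anchoring the induction at $h(G'_t)$. The genuinely hard accounting is that a single cut of $G_t$ can slice many clouds at once; I would handle this per cloud—each cut cloud contributes at least $\alpha$ times its smaller side by its own expansion, while uncut clouds and surviving black edges transport the $G'_t$ crossing edges across unchanged.

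For (2) I would note that a black edge between two \emph{surviving} nodes is never removed (only recoloured), so the only reason a shortest $G'_t$-path $u=x_0,\dots,x_\ell=v$ must lengthen is that some interior $x_i$ was deleted; but then its path-neighbours $x_{i-1},x_{i+1}$ both lay in $NBR(x_i)$ and hence sit in a common cloud, a $\kappa$-regular expander on at most $n$ nodes and so of diameter $O(\log n)$. Maintaining the invariant that the surviving members of a cloud stay together through reconstruction and combining, each deleted interior vertex costs an $O(\log n)$ detour and surviving edges cost $1$, for total length $O(\ell\log n)$. Finally, (4) follows from (3) by two uses of the Cheeger inequality: I would pass from $\lambda(G'_t)$ to $\phi(G'_t)\ge\lambda(G'_t)/2$, turn this into an edge-expansion statement (losing a degree factor from the volume normalisation), feed it through (3) to bound $h(G_t)$, renormalise to $\phi(G_t)$ by dividing by the maximum volume $O(\kappa\,d_{max}(G'_t))$ furnished by (1), and then apply $\lambda(G_t)>\phi(G_t)^2/2$. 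The squaring in Cheeger produces the $\lambda(G'_t)^2$ and $(\kappa\,d_{max}(G'_t))^2$ terms, and the two regimes of the outer $\min$ correspond exactly to the $\alpha$-floor versus the $h(G'_t)$-branch of (3); getting the powers of $d_{min}(G'_t)$, $d_{max}(G'_t)$ and $\kappa$ to land precisely as stated is the fiddly bookkeeping of this last step.
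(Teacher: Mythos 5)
Your overall architecture coincides with the paper's: part (3) is proved by induction on timesteps, with insertions commuted past deletions and the deletion step reduced to the primitive of replacing a deleted node by an $\alpha$-expander on its neighbours, followed by a per-cloud accounting of a worst cut (this is the content of Lemmas~\ref{lemma: firstheal} and~\ref{lemma: Expansion}); part (1) is a constant-per-event charge of new cloud memberships against black edges plus the bridge-node and shared-free-node exceptions (Lemma~\ref{lemma: degree}, which arrives at $d(x)\le \kappa d'(x)+2\kappa$); part (2) charges an $O(\log n)$ expander-cloud detour to each deleted interior vertex of a black shortest path (Lemma~\ref{lemma: stretch}); and part (4) is exactly the paper's two-sided Cheeger sandwich $\lambda(G'_t)\to\phi(G'_t)\to h(G'_t)\to h(G_t)\to\phi(G_t)\to\lambda(G_t)$, using (1) to replace $d_{max}(G_t)$ by $\kappa\, d_{max}(G'_t)$.

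The one step that fails as stated is the monotonicity claim inside your argument for (3): ``$X$ alone supplies at least $\alpha|T|$ crossing edges, which \dots dominates the at most $|T|$ crossing edges $v$ contributed \dots hence the cut value does not fall.'' \Xh never creates multi-edges: if an expander edge it wants to add is already present as a black edge, it merely recolours it. So in the extreme case where every prescribed expander edge leaving $T$ already existed before the repair, the repair contributes zero \emph{new} crossing edges while the $|T|$ edges from $v$ to $T$ are still lost, and the cut value strictly decreases. The paper flags precisely this possibility (``we have to handle the possibility that our self-healing algorithm may not add any new edges, because those edges may already be present'') and avoids it by never arguing cut-by-cut monotonicity: it lower-bounds $h(H)$ directly by $\bigl((|S|-|B|)h(G)-|B|+\alpha|B|\bigr)/|S|$, i.e., it forfeits the $h(G)$-contribution of the cloud vertices $B$ entirely and lets the expander's $\alpha|B|$ edges pay both for the forfeited contribution and for the edges lost to $v$. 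Your own closing per-cloud accounting (``each cut cloud contributes at least $\alpha$ times its smaller side by its own expansion, while uncut clouds and surviving black edges transport the $G'_t$ crossing edges across unchanged'') is this correct discounted count, so the fix is simply to drop the monotonicity lemma and promote that accounting to the actual proof. Everything else in your plan matches the paper's proof in both structure and level of detail.
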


From the above theorem, we get an important corollary:

\begin{corr}
\label{cor:spectral}
If $G'_t$ is a (bounded degree) expander, then so is $G_t$.  In other words,
if the original graph and the inserted edges is an expander, then \Xh guarantees that the healed graph also is an expander.
\end{corr}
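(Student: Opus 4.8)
The plan is to read off the corollary directly from Theorem~\ref{th:main}, since a (bounded degree) expander is precisely a graph whose maximum degree is bounded by a constant and whose edge expansion is bounded below by a constant (equivalently, by the Cheeger inequality, whose algebraic connectivity $\lambda$ is bounded below by a constant). So I would fix the hypothesis that $G'_t$ is a bounded degree expander, meaning $d_{max}(G'_t) = O(1)$ and $h(G'_t) \ge c$ for some absolute constant $c > 0$, and then verify that each of the four guarantees of the theorem yields the corresponding constant bound for $G_t$, once $\kappa$ is taken to be a constant (as the paper notes it can be in Section~\ref{sec: distributed}).

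First, for the degree bound I would invoke part~1: for every $x \in G_t$ we have $degree_{G_t}(x) \le \kappa \cdot degree_{G'_t}(x) \le \kappa \cdot d_{max}(G'_t)$, which is a constant since both $\kappa$ and $d_{max}(G'_t)$ are constants. Hence $G_t$ has bounded degree. Next, for the expansion I would invoke part~3: $h(G_t) \ge \min(\alpha, h(G'_t)) \ge \min(\alpha, c)$, and since $\alpha \ge 1$ is a fixed constant this lower bound is a positive constant. Thus $G_t$ simultaneously has bounded degree and constant edge expansion, which is exactly the definition of a bounded degree expander.

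To additionally certify the stronger spectral statement (constant algebraic connectivity), I would substitute into part~4. The two terms inside the $\min$ are $\Omega\left(\frac{\lambda(G'_t)^2 d_{min}(G'_t)}{\kappa^2 (d_{max}(G'_t))^2}\right)$ and $\Omega\left(\frac{1}{(\kappa d_{max}(G'_t))^2}\right)$; with $\kappa$ and $d_{max}(G'_t)$ constants and $\lambda(G'_t) = \Omega(1)$ (obtained from $h(G'_t) = \Omega(1)$ via the Cheeger inequality), both reduce to $\Omega(1)$ provided $d_{min}(G'_t) = \Omega(1)$. The one point that requires a short argument — and the only mild obstacle — is this last fact, since a priori ``bounded degree'' only constrains the maximum degree. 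I would establish it by the standard single-vertex-cut observation: taking $S = \{v\}$ in the definition of $h_{G'_t}$ gives $h(G'_t) \le degree_{G'_t}(v)$ for every $v$, so every degree, and in particular $d_{min}(G'_t)$, is at least $h(G'_t) = \Omega(1)$. With this in hand both terms are $\Omega(1)$, giving $\lambda(G_t) = \Omega(1)$ and completing the argument.
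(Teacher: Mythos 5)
Your proposal is correct and matches the paper's intent exactly: the paper states this corollary as an immediate consequence of Theorem~\ref{th:main} without further argument, and your verification simply reads off parts~1 and~3 (and optionally part~4) under the bounded-degree-expander hypothesis. Your additional observation that $d_{min}(G'_t) \ge h(G'_t)$ via a single-vertex cut is a correct and useful detail for making the spectral bound in part~4 yield $\lambda(G_t) = \Omega(1)$, but it is not needed for the corollary as stated, which only requires bounded degree and constant edge expansion.
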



\subsection{Expansion, Degree and Stretch}
\label{subsec: upperbounds}

\begin{lemma}
\label{lemma: firstheal}
Suppose at the first timestep (t=1), a deletion occurs. Then, after healing, $h(G_1) \ge min(c,h(G'_1))$, for a constant $c  \geq 1$.
\end{lemma}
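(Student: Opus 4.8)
The plan is to first reduce to the only case that can arise here. Since $t=1$, no insertions have occurred, so $G'_1 = G_0$, and every edge of $G_0$ is black; hence the deletion of $v$ triggers Case~1 of \textsc{Xheal}, which removes $v$ and installs a $\kappa$-regular expander of edge expansion $\alpha>2$ on the neighbour set $N = NBR(v)$ (a clique if $|N|\le\kappa+1$). Thus $G_1$ is exactly $G'_1$ with the vertex $v$ and its star deleted and an expander on $N$ added, where recolouring means every black edge of $G_0$ not incident to $v$ survives in $G_1$. I would then fix an arbitrary cut $(S,\bar S)$ of $V(G_1)=V(G_0)\setminus\{v\}$ with $|S|\le |V(G_1)|/2$ and bound its crossing edges from below.

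For the comparison, write $a=|S\cap N|$, $b=|\bar S\cap N|$ and $m=\min(a,b)$, and split the edges of $G_1$ crossing $(S,\bar S)$ into the $P$ edges having an endpoint outside $N$ (all inherited unchanged from $G_0$) and the $R$ edges lying inside $N$. The expander guarantee gives $R\ge\alpha m$, while every retained black $N$--$N$ crossing edge also lies in $R$, so $R\ge\max(Q,\alpha m)$, where $Q$ is the number of black $N$--$N$ edges crossing the cut in $G_0$. Reinserting $v$ on the side of $N$ that contains more of its neighbours turns $(S,\bar S)$ into a cut of $G_0=G'_1$ whose value is $P+Q+m$; since $|S|\le|V(G_0)|/2$, the definition of $h(G'_1)$ yields $(\star)$: $P+Q+m\ge h(G'_1)\,|S|$. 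The conceptual heart is that the expander boundary $\alpha m$ dominates the star boundary $m$ of the reinserted $v$, so the substitution never weakens a cut by more than a controlled amount.

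Combining these, $(\star)$ gives $P\ge h(G'_1)|S|-Q-m$, and feeding this into $|E(G_1)_{S,\bar S}|=P+R\ge P+\max(Q,\alpha m)$ yields, in either case $Q\ge\alpha m$ or $Q<\alpha m$, the bound $|E(G_1)_{S,\bar S}|\ge h(G'_1)|S|-m$; independently the expander alone gives $|E(G_1)_{S,\bar S}|\ge\alpha m$. Hence the crossing count is at least $\max(h(G'_1)|S|-m,\ \alpha m)$, and minimising this over $0\le m\le|S|$ (the worst cut chooses the crossover $m=h(G'_1)|S|/(\alpha+1)$) gives $h(G_1)\ge\min\!\big(\alpha,\ \tfrac{\alpha}{\alpha+1}h(G'_1)\big)$. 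This is exactly the asserted form $\min(c,h(G'_1))$ with $c=\alpha\ge1$, up to the factor $\tfrac{\alpha}{\alpha+1}<1$ on $h(G'_1)$, which is the $\beta$ permitted by the model's expansion metric and can be driven arbitrarily close to $1$ by taking the base expander to have larger expansion.

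I expect the main obstacle to be precisely the overlap between the newly added expander edges and the black $N$--$N$ edges that the algorithm merely recolours: when $N$ is already densely interconnected ($Q\ge\alpha m$), the expander may add no genuinely new crossing edges, so deleting $v$'s star costs up to $m$ crossing edges with no fresh compensation, and this is what forces the $\tfrac{\alpha}{\alpha+1}$ factor rather than equality. Secondary technical points are the correct placement of $v$ (and the $|V|/2$ parity check) when forming the comparison cut in $G'_1$, and the boundary regime $|N|\le\kappa+1$, where the clique on $N$ must be shown to supply at least $\alpha\min(a,b)$ crossing edges or else $|N|$ is a bounded constant that can be absorbed into the constants.
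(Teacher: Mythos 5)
Your argument follows the same basic strategy as the paper's proof---fix a worst cut of the healed graph, split its crossing edges into those inherited from $G_0$ and those supplied by the new expander $I$ on $N=NBR(v)$, and charge the loss of $v$'s star against the expander's guarantee---but your bookkeeping is genuinely different and, in fact, tighter than the paper's. The paper lower-bounds the non-expander contribution by $(|S(H)|-|B|)\,h(G)$, i.e.\ it applies the expansion of $G$ to the set $S(H)\setminus B$ and treats all of the resulting boundary edges as crossing the cut; this over-counts, since edges from $S(H)\setminus B$ into $B$ or into $v$ do not cross. Your device of re-inserting $v$ on the majority side to manufacture a comparison cut of $G_0'=G_0$ of value exactly $P+Q+m$, and then playing $\max(h|S|-m,\ \alpha m)$ off against itself, avoids that over-count entirely, and your worry about recoloured edges (the $Q\ge\alpha m$ regime, where the expander contributes nothing new) is exactly the case the paper's accounting does not handle.

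The price is that you prove $h(G_1)\ge\min\bigl(\alpha,\ \tfrac{\alpha}{\alpha+1}h(G_1')\bigr)$ rather than the stated $\min(c,h(G_1'))$, and you should be aware that this is not merely a cosmetic loss you failed to remove: the multiplicative factor is real. Take two $z$-cliques bridged by a single edge $(u,u')$ and let $v$ be adjacent to $u$ and $u'$ only; then $|N|=2$, the ``expander'' is the already-present edge $(u,u')$, and deleting $v$ drops the expansion from $2/z$ to $1/z$, i.e.\ $h(G_1)=\tfrac12 h(G_1')$ with both values below any fixed constant. So the lemma as literally stated (and the paper's derivation of it) cannot be correct without a constant $\beta<1$ in front of $h(G_1')$---which is precisely the form the model's success metric permits. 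In short: your proof is sound, it is the rigorous version of the paper's intended argument, and the weaker constant you obtain is the one that actually holds; the only genuine loose ends are the ones you already flag (placement of $v$ relative to the $|V|/2$ threshold, which is handled by noting $|E|\ge h\cdot\min$ of the two side sizes $\ge h|S|$ either way, and the small-$|N|$ clique regime, where the effective $\alpha$ degrades to $1$ and the constant degrades accordingly).
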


\begin{proof}
\begin{figure}[h!]
\label{fig: expandercuts}
\centering
\includegraphics[scale=0.6]{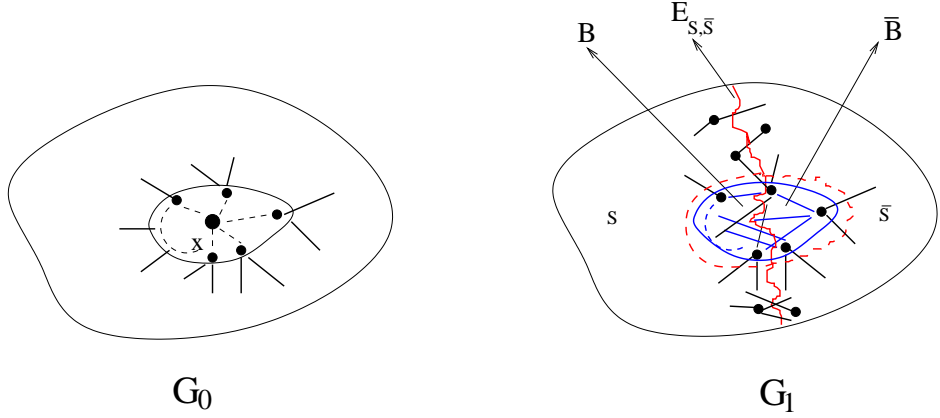}
\caption{Healed graph after deletion of node $x$. The ball of $x$ and its neighbors gets replaced by a $\kappa$-regular expander of its neighbors --- Case 1 of the Algorithm.} 
\end{figure}

%

Observe that the initial graphs $G_0$ and $G'_0$ are identical. Suppose that node $x$ is deleted at $t=1$.
For ease of notation, refer to the graph $G_0$ as $G$ and the healed graph $G_1$ as $H$. Notice that $G'_1$ is the same as $G_0$, since the graph $G'_t$ does not change if the action at time $t$ is a deletion. Consider the induced subgraph formed by $x$ and its neighbors. Since all the deleted edges are black edges,
Case 1 of the algorithm applies. Thus the healing algorithm will replace this subgraph by  a new subgraph, a $\kappa$-regular expander over $x$'s ex-neighbors. Let us call this new subgraph $I$.  Note that this corresponds to {\bf Case 1} of the Algorithm. We refer to Figure \ref{fig: expandercuts}.
 
 Consider a set $S(H)$ which defines the expansion in $H$ i.e. $|S(H)| \le n/2$ (where $n$ is the number of nodes in $G$), and $S(H)$ has the minimum expansion over all the subsets of $H$. Call the cut induced by $S(H)$ as $E_{S,\bar{S}}(H)$ and its size as $|E|_{S,\bar{S}}(H)$. Also refer to the same set in $G$ (without $x$ if $S(H)$ included $x$) as $S(G)$, and the cut as $E_{S,\bar{S}}(G)$.  The key idea of the proof
 is to directly bound the expansion of $H$, instead of looking at the change of expansion from
 of $G$.
  In particular,
 we have to handle the possibility that our self-healing algorithm
 may not add any new edges, because those edges may already be present.
 (Intuitively, this means that the prior expansion itself is  good.)
  
 We consider two cases depending on  whether the healing may or may not have affected this cut.  

\begin{enumerate}
\item{$E_{S,\bar{S}}(H) \cap E(I) = \emptyset$:}
\label{prfpart: onlyblackedges}

This implies that only the edges which were in $G$ are involved in the cut $E_{S,\bar{S}}(H)$.  
 Since expansion is defined as the minimum over all cuts,  $|\E|_{S,\bar{S}}(G) \ge h(G) |S(G)|$.
 Also, since $\E_{S,\bar{S}}(H) = \E_{S,\bar{S}}(G)$ and $S(H) \le S(G)$, we have: 
 \[
 h(H) = \frac{\E_{S,\bar{S}}(H)}{S(H)} \ge \frac{\E_{S,\bar{S}}(G)}{S(G)} \ge h(G).
 \]
 
 \item{$E_{S,\bar{S}}(H) \cap E(I) \ne \emptyset$:}
  Notice that if there is any minimum expansion cut not intersecting $E(I)$,  part~\ref{prfpart: onlyblackedges} applies, and we are done.
  
  
 
 The healing algorithm tries to add enough new edges (if needed) into $I$ so that $I$ itself has an expansion of $\alpha > 2$ (cf. Algorithm in Section \ref{sec:algorithm}). Note that it may not succeed if $|I|$ is too small. However, in that case, the algorithm  makes $I$ a clique and achieves an expansion of  $c$ where $c \geq 1$. Thus, we have the following cases:

 \begin{enumerate}
 

 \item \emph{$I$ has an expansion of  $\alpha > 2$:}\\
 Consider the nodes in $I$ which are part of $S(H)$ i.e., $B  = S(H) \cap I$. We want to calculate $h(H)$. Since expansion is defined over sets of size not more than half of the size of the graph, we can do so in two ways:

\begin{enumerate}
\item {$B \le I/2$:}
$S(H)$ expands at least as much as $h(G)$ except for the edges lost to $x$, and our algorithm ensures that $I$ has expansion of at least $\alpha > 2$. Therefore, we have:
\begin{eqnarray*}
 h(H) & = &\frac{\E_{S,\bar{S}}(H)}{S(H)}\\
 &   \ge&  \frac{(|S(H)| - |B|).h(G) - |B| + |B|. \alpha}{|S(H)|}\\
  &  =  & \frac{(|S(H)| - |B|).h(G) +  |B|.(\alpha -1)}{|S(H)|}
      \end{eqnarray*}
In the numerator above, we have $(|S(H)| - |B|).h(G)$ which is a lower
bound for the number of edges emanating from the set $S(H)$ (we minus $|B|$ from 
$|S(H)|$
to account for the edges that may be already present, note that Xheal does 
not add edges between two nodes if they are already present.) We subtract another $|B|$ or the edges lost to the deleted node and add $|B| \alpha$ edges due to the expansion gained.

The following cases arise:
 If $h(G) \ge  \alpha - 1$, we have  $h(H)  \ge   \frac{|S(H)|(\alpha-1) }{|S(H)|}
   \ge  \alpha - 1 > 1$. Otherwise, 
if $  h(G) \le \alpha - 1$,  we get:  $ h(H) \ge   \frac{|S(H)|.h(G)}{|S(H)|}
   \ge  h(G) $


\item {$\bar{B} \le I/2$:}
By construction, nodes of $\bar{B}$ expand with expansion at least $\alpha$ in the subgraph $I$.
  Similar to above, we get, $ h(H) \ge  \frac{(|S(H)| - |\bar{B}|).h(G) +  |\bar{B}|.(\alpha -1)}{|S(H)|} $.
%
Thus, if $h(G) \ge  \alpha - 1$, then $ h(H) \ge \alpha - 1$, else  $ h(H)  \ge h(G)$.

 \end{enumerate}

 \item \emph{$I$ has an expansion of  $c < \alpha$:}\\
 This  happens in the case of the degree of $x$ being smaller than $k$.  In this case,  the expander  $I$  is just a clique. Note that, even if degree of $x$ is 2, the expansion is 1. (When the degree of $x$ is 1, then the deleted node is
 just dropped, and it is easy to show that in this case, $h(H) \geq h(G)$.) The same analysis as the above applies, and we get $h(H) \ge min(c', h(G))$, for some constant $c'\geq 1$. Since $G$ is $G_1$ and $H$ is $G'_1$, we get  $h(G_1) \ge min(c', h(G'_1))$.
\end{enumerate}
\end{enumerate}
\end{proof}

\begin{corr}
Given a graph $G$, and a subgraph $B$ of $G$, construct a new graph $H$ as follows: Delete the edges of $B$ and insert an expander of expansion $\alpha > 2$  among the nodes of $B$. Then $h(H) \ge \min(c,h(G))$, where $c$ is a constant.
\end{corr}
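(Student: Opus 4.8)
The plan is to recognize that this corollary is precisely the abstraction of Case~1 of the algorithm that drives Lemma~\ref{lemma: firstheal}, stripped of the node deletion, so I would reuse that argument almost verbatim while noting one simplification: here no vertex is removed, hence there is no ``edges lost to the deleted node'' term. Set $I$ to be the inserted expander on $V(B)$ (with $h(I)\ge\alpha>2$, or a clique of expansion $c\ge 1$ when $|V(B)|\le\kappa+1$), and observe $V(H)=V(G)$ while $E(H)=(E(G)\setminus E(B))\cup E(I)$. Following Lemma~\ref{lemma: firstheal}, I would directly lower-bound $h(H)$ rather than track the change from $G$: fix a set $S=S(H)$ with $|S|\le n/2$ achieving the minimum expansion in $H$, put $B_S=S\cap V(B)$, and split on whether the cut $E_{S,\bar S}(H)$ meets $E(I)$.

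First consider $E_{S,\bar S}(H)\cap E(I)=\emptyset$. Since $I$ is connected (an expander, or a clique), if both $B_S$ and $V(B)\setminus B_S$ were nonempty then some expander edge would cross the cut; hence $V(B)\subseteq S$ or $V(B)\subseteq\bar S$. In either case every edge incident to the $B$-free side is an unchanged edge of $G$, so $E_{S,\bar S}(H)=E_{S,\bar S}(G)$, and because $|S|\le n/2$ we get $h(H)=|E|_{S,\bar S}(G)/|S|\ge h(G)$. This is exactly part~\ref{prfpart: onlyblackedges} of the earlier proof. Next consider $E_{S,\bar S}(H)\cap E(I)\neq\emptyset$, so $B_S$ and $V(B)\setminus B_S$ are both nonempty. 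Here I would invoke the expansion of $I$: the expander edges crossing the cut number at least $\alpha\min(|B_S|,|V(B)|-|B_S|)$, while the surviving non-$B$ edges of $G$ crossing the cut supply the $G$-expansion of the remaining portion of $S$. Casing on whether $|B_S|\le |V(B)|/2$ and combining the two contributions gives, just as in Lemma~\ref{lemma: firstheal}, $|E|_{S,\bar S}(H)\ge (|S|-|B_S|)\,h(G)+|B_S|(\alpha-1)$ (and its symmetric counterpart), whence $h(H)\ge\min(\alpha-1,h(G))$ with $\alpha-1>1$. The small-$B$ clique fallback yields $h(H)\ge\min(c,h(G))$ for a constant $c\ge 1$, completing the bound.

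The step I expect to be the main obstacle is justifying the $(|S|-|B_S|)\,h(G)$ term in the second case: deleting the internal edges of $B$ can wipe out many cut edges when $S$ straddles $V(B)$, so one must argue this loss is never fatal. The key observation is that any deleted $B$-edge crossing $(S,\bar S)$ was itself one of the $G$-edges already counted in $|E|_{S,\bar S}(G)$, so their removal cannot push the surviving count below the contribution of $S\setminus V(B)$; the freshly inserted expander edges then restore the lost connectivity with a guaranteed factor $\alpha>2$. I would also be careful to measure $h(I)$ against the correct denominator $|V(B)|/2$, and to note that, since no node is deleted here, the $-|B_S|$ penalty present in Lemma~\ref{lemma: firstheal} is absent, so the corollary's bound is in fact no weaker than the lemma's.
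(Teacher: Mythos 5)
Your proposal is correct and follows essentially the same route as the paper: the paper offers no separate proof of this corollary, presenting it as an immediate abstraction of the argument for Lemma~\ref{lemma: firstheal}, which is exactly what you do (same case split on whether the minimum cut meets $E(I)$, same lower bound combining $(|S|-|B_S|)h(G)$ with the $\alpha$-expansion of $I$, same clique fallback). Your added observations --- that the connectivity of $I$ forces $V(B)$ entirely to one side in the disjoint case, and that the $-|B_S|$ penalty for edges lost to a deleted node is absent here --- are correct refinements of, not departures from, the paper's argument.
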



\begin{lemma}
\label{lemma: Expansion}
At  end of any timestep t, $h(G_t) \ge min(c',h(G'_t))$, where $c' \geq 1$ 
is a fixed constant.
\end{lemma}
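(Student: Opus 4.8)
The plan is to prove the statement by induction on the timestep $t$, using the corollary that follows Lemma~\ref{lemma: firstheal} — replacing the edges of any subgraph $B$ of a graph $G$ by a $\kappa$-regular expander of expansion $\alpha>2$ on the nodes of $B$ yields a graph of expansion at least $\min(c,h(G))$ — as the single workhorse. I would fix $c'$ to be exactly this constant $c$. The base case $t=1$ is precisely Lemma~\ref{lemma: firstheal}. So assume the induction hypothesis $h(G_{t-1}) \ge \min(c', h(G'_{t-1}))$ and consider the $t$-th operation.

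If the $t$-th operation is an insertion, the same node $v$ and the same (black) edge set $E_v$ are added to both $G_{t-1}$ and $G'_{t-1}$ — the adversary may only attach $v$ to nodes currently present, and these are common to both graphs — while the healing algorithm does nothing. I would establish the invariant here by a direct cut comparison: for the minimum-expansion witness cut $(S,\bar S)$ of $G_t$, the edges of $E_v$ it contains are identical to those the same cut contributes in $G'_t$, and its remaining edges are inherited from $G_{t-1}$; matching this against the analogous cut in $G'_t$ and invoking the induction hypothesis gives $h(G_t)\ge \min(c',h(G'_t))$. This is the easy direction, the only mild technical point being the shift of the ``at most half the vertices'' threshold when $v$ is added.

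If the $t$-th operation is a deletion then $G'_t=G'_{t-1}$, and the whole burden is to show $h(G_t)\ge \min(c', h(G_{t-1}))$, after which the induction hypothesis and the identity $\min(c,\min(c,x))=\min(c,x)$ finish the step. The key structural observation is that in every branch of Algorithm~\ref{algo: repairbyexpander} the transition $G_{t-1}\to G_t$ is a composition of local ``replace-a-subgraph-by-an-expander'' moves of exactly the type covered by the corollary following Lemma~\ref{lemma: firstheal}: Case~1 is a single such move (the base operation of Lemma~\ref{lemma: firstheal}); \textsc{FixPrimary} reconstructs each affected primary cloud $C_i$ as a $\kappa$-regular expander on its surviving nodes (one move per cloud); \textsc{MakeSecondary} and \textsc{FixSecondaryCloud} build a $\kappa$-regular expander on the chosen free/bridge nodes (one further move); free-node \emph{sharing} rebuilds a cloud's expander on its nodes together with the borrowed node (again a single move); and the fallback that \emph{combines} clouds simply deletes all their edges and places one expander on the union of their nodes (a single move on $B=\bigcup_i C_i$). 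Applying the corollary once per move, and using $\min(c,\min(c,x))=\min(c,x)$ so the constant does not degrade across the at most $O(j)$ reconstructions triggered by a single deletion, yields $h(G_t)\ge \min(c,h(G_{t-1}))$.

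The main obstacle is making the reduction of Cases~2.1 and 2.2 to the corollary fully rigorous: one must verify that each individual restructuring — in particular node-sharing (where the new expander's vertex set differs from the deleted subgraph by one borrowed node) and cloud-combining (where several colored clouds collapse into one) — is genuinely an instance of ``delete the edges of a subgraph $B$ and place an expander of expansion $\alpha>2$ on $V(B)$,'' so that the corollary applies verbatim and no intermediate step can open a cut thinner than $\min(c,\cdot)$. Once each atomic move is certified in this way, the composition and the two-line induction become routine.
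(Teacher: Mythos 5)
There is a genuine gap in your deletion step. You propose to decompose the transition $G_{t-1}\to G_t$ into atomic ``delete the edges of a subgraph $B$ and place an expander on $V(B)$'' moves and to apply the corollary after Lemma~\ref{lemma: firstheal} once per move, chaining via $\min(c,\min(c,x))=\min(c,x)$. This chaining is only sound if \emph{every} intermediate graph in the sequence is obtained from its predecessor by a legitimate instance of the corollary and therefore itself satisfies $h(\cdot)\ge\min(c,h(G_{t-1}))$. Neither holds. First, the adversarial removal of $v$ is not an instance of the corollary at all (the corollary never deletes a vertex, and an expander is placed on $V(B)$, not on $V(B)\setminus\{v\}$), and this removal alone can disconnect the graph. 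Second, even after bundling the removal of $v$ with the first reconstruction, the intermediate graphs remain problematic: in Case 2.1 the clouds $C_1,\dots,C_j$ may be connected to one another only through $v$, so after \textsc{FixPrimary} has rebuilt each $C_i$ on its surviving nodes but before \textsc{MakeSecondary} has run, the graph can be disconnected, i.e.\ $h=0$, and your chain bottoms out at $\min(c,0)=0$. The same objection applies to treating cloud-combining and node-sharing as separate moves. In short, only the \emph{composite} repair (delete $v$, fix all primaries, build the secondary) restores expansion; the individual moves do not, so the corollary cannot be applied per move. The paper avoids this entirely: it never examines intermediate graphs, but takes a minimum-expansion cut $S(H)$ of the final healed graph $H=G_t$ and bounds $|E_{S,\bar S}(H)|$ in one shot, summing over all affected clouds the correction terms $-\sum_i|A_i|$ (edges lost to $v$) and $+\alpha\sum_i|A_i|$ (edges supplied by the new expanders), which yields $h(H)\ge\min(\alpha-1,h(G_{t-1}))$ directly.

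Two smaller points. Your insertion case also diverges from the paper: the paper argues that insertions commute with the whole process (an inserted node carries only black edges and the algorithm's behaviour does not depend on when it arrived), so one may pretend all insertions precede all deletions and reduce to the deletion case; your direct cut comparison must instead confront the fact that $G'_t$ and $G_t$ have different vertex sets ($G'_t$ retains deleted nodes), so ``the analogous cut in $G'_t$'' is not immediately well defined and the induction hypothesis cannot be invoked cut-by-cut without additional work. Finally, note that the corollary you lean on is itself stated in the paper without proof as a by-product of Lemma~\ref{lemma: firstheal}'s argument; it certifies a single replacement inside an otherwise intact graph, which is exactly why it does not stretch to cover the simultaneous multi-cloud surgery of Cases 2.1 and 2.2.
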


\begin{proof}

\begin{figure}[h!]
\centering
\includegraphics[scale=0.6]{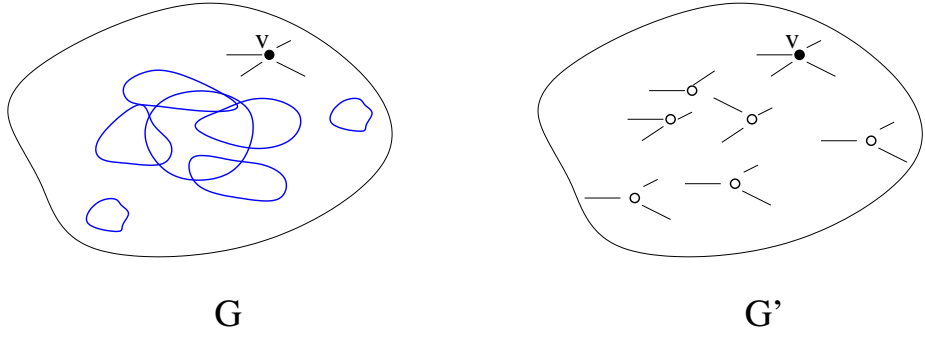}
\label{fig: insertcuts}
\caption{Graphs $G_t$ and $G'_t$ after insertion of node $v$. Graph $G_t$ has some colored clouds. The nodes which have already been deleted are present in graph $G'_t$ and are shown as unfilled nodes.} 
\end{figure}

\begin{figure}[h!]
\centering
\includegraphics[scale=0.6]{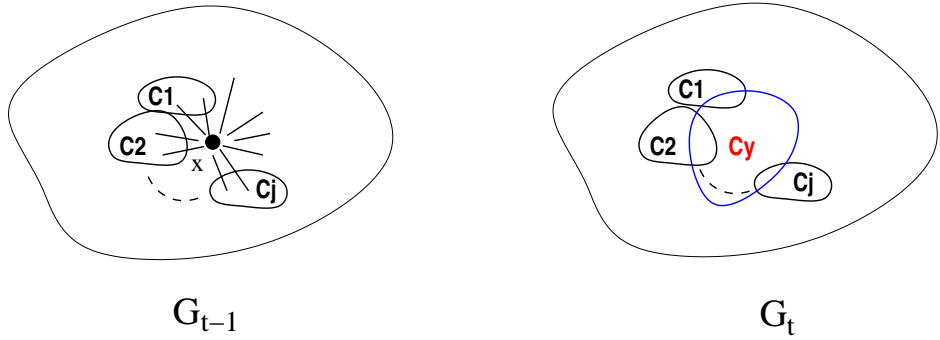}
\label{fig: manyexpandercuts}
\caption{Healed graph after deletion of node $x$. The 'black' neighbors of  $x$  and some neighbors of $x$ from color clouds $C_1, C_2, \ldots C_j$ get connected by a $\kappa$-regular  expander of color $C_y$.} 
\end{figure}

First, consider the case when node $v$ is inserted at time $t$. Observe that the topologies of both the graphs $G_t$ and $G'_t$ would be the same if all the insertions were to happen before the deletions. This is because an incoming node comes in with only black edges and at no step does the healing algorithm rely on the  number of nodes present or uses edges for possible future nodes. Therefore, for our analysis, consider an order in which all the insertions happened before the first deletion, in particular think of  node $v$ as being inserted at time $s$, and the first deletion happening at time $s+1$. Since the  graphs $G_i$ and $G'_i$ would look exactly the same for all $i$ before $s+1$, insertion of node $v$ changes both the graphs $G_t$ and $G'_t$ in exactly the same way. Thus, if we can show that our lemma holds when a deletion happens (as we show below), we are done.


Next, we consider that a deletion occurs at timestep $t$.  
The proof will be by induction on $t$. Lemma~\ref{lemma: firstheal} already shows the base case, where it is assumed \emph{wlog} that the first deletion occurs at time $t=1$. Notice that before the first deletion, graphs $G$ and $G'$ are identical and the proof is trivial. 

As per the algorithm, we have two main cases to consider.

{\bf Case 1:} This case occurs when the deleted edges are all black edges. This case is handled exactly as in the proof of  Lemma \ref{lemma: firstheal}.
 
{\bf Case 2.1 and Case 2.2:} We analyze Case 2.1 below, the analysis of Case 2.2 is similar.

 First, we give the proof assuming that each cloud has a free node associated with it.
 
Refer to figure~\ref{fig: manyexpandercuts}. Let $G$ be the original graph and $H$ the healed graph.  Let $x$ be the node deleted. The graph $G$ corresponds to the graph $G_{t-1}$. The graph $G'_{t-1}$ is the same as the graph $G'_t$ since the graph $G'$ does not change on deletion.
By the induction hypothesis, $h(G) = h(G_{t-1}) \ge min(c, h(G'_{t-1}) =  h(G'_{t}) )$. The graph $H$ corresponds to the healed graph $G_t$. Thus, if we show $h(H) \ge  h(G)$, we are done.

In this case, let the deleted node $x$ belong to $j$ primary clouds $C_1$ to $C_j$. (We note that if $x$ has black neighbors, the algorithm treats them as singleton primary clouds.) First the primary clouds are restructured by constructing a new $k$-regular expander among the remaining nodes of the cloud (excluding the deleted node). Then, a free  node from each color cloud is picked and are connected to form a $\kappa$-regular expander of color, say, $C_x$ --- this is the secondary cloud.

The proof is a generalization of the argument  of Lemma~\ref{lemma: firstheal}. Let $E_S((H)$ be a cut that defines the expansion in the graph $H$, and $S_H$ as defined before . Let us call this a minimum cut. If any minimum cut $E_S(H)$ passes through only the edges
of $E(G)-(E(C_1)\cup \dots \cup E(C_j) \cup E(C_x)$ (i.e., outside these clouds) then the  expansion of $H$ cannot decrease and we are done. Thus, we will consider the cases when all minimum cuts pass through some edges of the above clouds.

 Each of the colored balls maintains an expansion of at least $\alpha > 2$. Let $B_1, B_2, \ldots B_j$, $B_x$, be the nodes of $S_H$ in the balls of color $C_1, C_2,  \ldots C_j$, $C_x$ respectively. (We  abuse notation so that each $C_i$ also denotes the subgraph defining the respective primary cloud.) In the following, for $1 \leq i \leq j$,  we define $A_i = B_i$ if $|B_i| \leq |C_i|/2$, otherwise,  we define $A_i = \bar{B_i} = C_i - B_i$ if $|B_i| \geq |C_i|/2$.  
 $A_x$ is similarly defined. 
   
We have:
 

\begin{eqnarray*}
 h(H) & \ge &  \frac{(|S(H)| -\sum_{i= 1}^x |A_i|) h(G)  - \sum_{i= 1}^x |A_i| +  \sum_{i= 1}^x |A_i| \alpha}{|S(H)|} \\
  & = &  \frac{(|S(H)| - \sum_{i= 1}^x |A_i|) h(G)  + (\sum_{i= 1}^x |A_i|)(\alpha - 1)}{|S(H)|}
\end{eqnarray*}

 If $h(G) \ge  \alpha - 1$, we have:
 
\begin{eqnarray*}
 h(H) & \ge  \alpha - 1 > 1
\end{eqnarray*}

 If $  h(G) \le \alpha - 1$,  we have:
 
\begin{eqnarray*}
 h(H) & \ge  & h(G) 
\end{eqnarray*}

Thus, $h(G_t) = min(c',  h(G'_t))$, for  some $c' = min(c, \alpha - 1)$ and
the induction hypothesis holds.

The above analysis assumes that each primary cloud had a free node for itself.
Otherwise, as per the algorithm, free nodes from other clouds are shared. If there there a total of $j$ free nodes among all the $j$ clouds,
then also the analysis  proceeds as above. The only difference is that when a free node is shared
between two clouds, its degree increases (by $k$). This can only increase the expansion, and hence the above analysis goes through.
The other possibility is that there are less than $j$ free nodes. In this case,
all the primary clouds are combined into one single expander cloud.
Here also, the analysis is similar to above.
\end{proof}



\begin{lemma}
\label{lemma: degree}
  For all $x \in G_t$, $degree_{G_t}(x) \le O(\kappa.degree_{G'_t}(x))$,  for a fixed parameter $\kappa> 0$.
\end{lemma}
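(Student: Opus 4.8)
The plan is to bound $\degree_{G_t}(x)$ by partitioning the edges incident to $x$ into three classes---surviving black edges, primary-cloud edges, and secondary-cloud edges---and then charging the cloud edges against the black edges that $x$ owns in $G'_t$. Write $p(x)$ for the number of primary clouds containing $x$ and $s(x)$ for the number of secondary clouds for which $x$ is a bridge node. Each such cloud is either a $\kappa$-regular expander or a clique on at most $\kappa+1$ nodes, so it contributes at most $\kappa$ edges to $x$. Moreover the black edges incident to $x$ in $G_t$ form a subset of those in $G'_t$: deletion removes a black edge from $G_t$ but never from $G'_t$, recoloring only converts a black edge of $G_t$ into a colored one (leaving it black in $G'_t$), and insertions add the same black edge to both graphs; the healing algorithm itself never creates black edges. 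Hence
\[
\degree_{G_t}(x) \le \degree_{G'_t}(x) + \kappa\,\bigl(p(x) + s(x)\bigr),
\]
and it suffices to prove $p(x) + s(x) = O(\degree_{G'_t}(x))$.

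For the secondary term I would invoke the structural fact built into the algorithm that a node, in its role as bridge node of a fixed primary cloud, lies in at most one secondary cloud; since $x$ can be a bridge node only for primary clouds that contain it, this gives $s(x) \le p(x)$. It therefore remains to bound $p(x)$, and the heart of the proof is an injective charging scheme. When $x$ first joins a primary cloud $C_v$ created by the deletion of a neighbor $v$ (Case 1, or as a singleton black neighbor inside Case 2), I charge that membership to the black edge $(x,v)$. Because $G'_t$ retains every deleted node together with its black edges, $(x,v)$ is a genuine edge of $G'_t$ incident to $x$, and distinct deleted neighbors furnish distinct such edges; an injective charging of this kind would yield $p(x) \le \degree_{G'_t}(x)$ and close the argument.

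The main obstacle is that not every cloud-rewriting step introduces a fresh deleted neighbor, so the charging must be shown to survive all operations, which I would establish by induction on the timestep. Reconstruction of a primary cloud (\textsc{FixPrimary}) preserves both its membership and its color, so all charges are untouched. Combining several primary clouds into one only merges memberships of $x$ into a single cloud, so $p(x)$ cannot increase and the surviving cloud inherits one of the now-freed charges. The genuinely delicate case is \emph{sharing}: a free node $w = x$ is moved from a cloud $C'$ into a cloud $C$ that lacks free nodes, adding $x$ to $C$ even though $x$ was never adjacent to the node that spawned $C$, so no new black edge of $G'_t$ is available to absorb the new membership.

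To handle this, I would argue that a shared node becomes a non-free bridge node at the moment it is shared, and hence cannot be shared again (sharing selects only free nodes) until a subsequent combining operation re-frees it---and that same combining operation folds the extra primary membership it had acquired into the merged cloud, so the surplus is reclaimed. Formalizing this as a single invariant that simultaneously controls $p(x)$, $s(x)$, and the number of outstanding shared memberships of $x$, and verifying that it is preserved by each branch of \textsc{Xheal} (Case 1, Case 2.1, Case 2.2, together with the free-node sharing and cloud-combining sub-cases), is the crux; once the invariant gives $p(x) = O(\degree_{G'_t}(x))$, the displayed inequality yields $\degree_{G_t}(x) = O(\kappa\,\degree_{G'_t}(x))$ and everything else is routine bookkeeping.
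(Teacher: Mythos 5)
Your proposal is correct and follows essentially the same route as the paper's proof: the paper likewise charges the $\kappa$ colored edges of each cloud membership to the deleted black edge that spawned it, observes that cloud reconstruction is degree-neutral, and bounds the bridge-node and sharing contributions by an additive $2\kappa$ (using exactly your observation that a shared node immediately becomes a non-free bridge node and so cannot be shared again), arriving at $\degree_{G_t}(x)\le \kappa\,\degree_{G'_t}(x)+2\kappa$. Your version is merely a more explicit bookkeeping of the same argument via $p(x)$ and $s(x)$, and the ``crux'' you flag is dispatched in the paper by the same one-line invariant you propose.
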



\begin{proof}
We bound the increase in degree of any node $x$ that belongs to both
$G_t$ and $G'_t$. Let the degree of $x$ in $G'_t$ be $d'(x) = degree_{G'_t}(x)$.
This will be black-degree of $x$ (as $G'_t$ comprises solely of edges present
in the original graph plus the inserted edges). There are three cases to consider and we bound the degree increase
in each:

1. Whenever, a black edge gets deleted from this node, the self-healing algorithm, adds $\kappa$ colored edges in place of it, because
a $\kappa$-regular expander is constructed which includes this node (this
expander can be a primary or a secondary cloud). Thus $x$'s degree can increase
by a factor of $\kappa$ at most because of deletion of black edges.

2.  When $x$ loses
a colored edge, then the algorithm restructures the expander cloud by constructing a new $\kappa$-regular expander. Again, this is true if the reconstruction is done on a primary or a secondary cloud. In this case,
the degree of $x$ does not change.  

3. Finally, we consider the effect of non-free nodes. $x$'s degree can increase if it is chosen as a bridge (non-free) node to connect
a primary cloud (with which it is associated) to a secondary cloud. In this case, its degree will increase by $\kappa$, since it will become part of the secondary cloud expander. There is one more possibility
that can contribute to increase of $x$'s degree by $\kappa$ more. If $x$ is chosen to be shared as a free node, i.e., it gets associated as a free node with another primary cloud than it originally belongs to, then its degree increases by $\kappa$ more, since it becomes part of another $\kappa$-regular expander. The shared node
becomes a bridge node, i.e., a non-free node in that time step. Hence it cannot
be shared henceforth.

From the above, we can bound the degree of $x$ in $G_t$,  $d(x) = degree_{G_t}(x)$, as follows: 
$d(x) \leq \kappa  d'(x) + 2\kappa.$
The lemma follows.
\end{proof}


\begin{lemma}
\label{lemma: stretch}
   For any two nodes $u,v \in G_t$, \\ $\delta_{G_t}(u,v) \le \delta_{G'_t}(u,v).O(\log n)$, where $\delta(u,v)$ is the shortest path between $u$ and $v$,  and $n$ is the total number of nodes in $G_t$.
\end{lemma}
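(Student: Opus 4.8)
The plan is to reduce to deletions, record that every cloud has logarithmic diameter, and then embed a shortest path of $G'_t$ into $G_t$ so that each original edge is replaced by a walk crossing only $O(1)$ clouds. First I would dispose of insertions exactly as in the proof of Lemma~\ref{lemma: Expansion}: an inserted node arrives with black edges and triggers no healing, so reordering all insertions before the deletions leaves the pair $(G_t,G'_t)$ unchanged, and it suffices to bound the stretch immediately after each deletion. Next I would record the one geometric fact that drives everything: every primary or secondary cloud is either a clique or a $\kappa$-regular expander of edge expansion $\alpha>2$ on at most $n$ vertices, and a constant-degree, constant-expansion expander on $m\le n$ vertices has diameter $O(\log m)=O(\log n)$. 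Thus crossing any single cloud costs only $O(\log n)$ hops in $G_t$.

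The core of the argument is a path-replacement scheme. Fix two nodes $u,v$ alive in $G_t$ and a shortest $G'_t$-path $P=w_0,w_1,\dots,w_k$ with $w_0=u$, $w_k=v$, so that $k=\delta_{G'_t}(u,v)$. To each $w_i$ I would assign a live anchor $r_i\in G_t$: if $w_i$ is alive set $r_i=w_i$; if $w_i$ has been deleted, set $r_i$ to a canonical live vertex of the unique current cloud that absorbed $w_i$'s connections (the descendant, under repeated \textsc{FixPrimary} and combine operations, of the cloud created or entered at $w_i$'s deletion). The goal is to show $\delta_{G_t}(r_i,r_{i+1})=O(\log n)$ for every $i$ with a \emph{fixed} constant, whence $\delta_{G_t}(u,v)\le\sum_i\delta_{G_t}(r_i,r_{i+1})=O(\log n)\cdot k$. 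The essential point is that each edge $(w_i,w_{i+1})$ is charged to a bounded number of cloud crossings, so the multiplicative factor never accumulates across timesteps; this is precisely why a naive induction that lets pairwise distances grow additively by $O(\log n)$ per deletion must be avoided.

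I would then establish, by induction on the deletions, the invariant that for every edge $(a,b)$ of $G'_t$ the anchors $r_a,r_b$ lie in a common cloud, or in two clouds joined by a secondary cloud; since each cloud has diameter $O(\log n)$ this gives $\delta_{G_t}(r_a,r_b)=O(\log n)$. Maintenance is checked against the three cases of the algorithm: in Case~1 the fresh primary cloud on $NBR(x)$ places all surviving neighbours of $x$ together; in Case~2.1 \textsc{FixPrimary} rebuilds each affected primary cloud on its remaining vertices, keeping its alive members inside one expander, while the new secondary cloud links exactly the clouds that shared the deleted node; in Case~2.2 \textsc{FixSecondary} replaces the lost bridge node (or, failing a free node, combine merges the primary clouds into one), so the cross-cloud link, and hence the bound, survives. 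I also need the small observation that a black edge of $G'_t$ between two alive vertices is never destroyed without its endpoints being placed in a common cloud: recolouring keeps the edge present, and only combine deletes cloud edges, but it simultaneously puts both endpoints into the merged cloud.

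The hard part will be the consistency of the anchors under cascading deletions, that is, when both $w_i$ and $w_{i+1}$ have died: one must argue that the cloud descendants representing them coincide or remain secondary-cloud-linked, and that a deleted node which was already a cloud member gets absorbed into the clouds it belonged to rather than spawning a fresh, disconnected structure. Handling this, together with verifying that free-node sharing and combining only ever \emph{increase} connectivity and that every charge is to $O(1)$ expander crossings of diameter $O(\log n)$, is the crux that keeps the stretch factor at a fixed $O(\log n)$.
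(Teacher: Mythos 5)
Your proposal is correct and follows essentially the same route as the paper: replace each edge of the shortest $G'_t$-path by a traversal of $O(1)$ primary/secondary clouds, each of diameter $O(\log n)$ by the expander property, and argue that the algorithm's repair operations (rebuilding a primary cloud, linking via a secondary cloud, or combining) keep formerly adjacent path vertices within one or two linked clouds so the cost per original edge never compounds. Your anchor-and-invariant formalization is a cleaner packaging of what the paper argues informally by walking through the deleted path nodes in order of deletion, and the ``hard part'' you flag (consistency under cascading deletions) is exactly the step the paper dispatches with its two-case analysis of subsequent deletions inside an existing primary versus secondary cloud.
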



\begin{proof}
We fix two nodes $u$ and $v$ and let the shortest distance between
them in $G'_t$ be $\ell$. Since this is on the graph
$G'_t$ (which comprises the original edges plus inserted edges),  all the edges on this path will be black edges. Let this shortest path be denoted by $P = <u, u_1, \dots, u_{\ell-1}, v>$. We assume that $\ell > 1$, because the path will just be the edge $(u,v)$ if $\ell = 1$ in which case there is nothing to prove (the edge will also
be present in $G_t$). 

If all the intermediate nodes are present, then the result follows trivially.
Otherwise, let $u'_1, u'_2, \dots, u'_i$ ($i \leq \ell$) be the $i$ deleted nodes
listed in the order of their deletion (i.e., $u'_1$ was deleted before $u'_2$ and so on).

We show that each node deletion can increase the distance between $u$ and $v$
by  $O(\log n)$. Consider the deletion of node $u'_1$. This will create
a $k$-regular expander (primary or secondary, the latter case will arise if some
incident edges of $u'_1$ are colored) among the neighbors of $u'_1$ in path $P$. Thus the  distance between these neighbors of $u'_1$ will increase by $O(\log (deg(u'_1)) = O(\log n)$. We distinguish two cases for subsequent deletions:

1. When the deleted node, say $u'_{j}$, results in a primary cloud:
In this case, the distance between the neighbors of $u'_j$ will increase
by at most  $O(\log n)$, as above. Note that any subsequent deletion
of nodes belonging to the primary cloud will still keep the same stretch,
as there will always be connected via a $k$-regular expander.

2. When the deleted node,  say $u'_{j}$, results in a secondary cloud:
In this case, there are two possibilities: (a) If the secondary cloud 
does not comprise primary clouds formed from previous deletions of nodes
in the path $P$. In this case, the increase in distance is $O(\log n)$ as above;
(b) If the secondary cloud comprises primary clouds formed from prior deletions of nodes in $P$, then the distance between $u$ and $v$ increases
also by $O(\log n)$,
 as one has to traverse through the secondary cloud  (connecting the primary clouds).

Thus, the overall distance between $u$ and $v$ increases by a factor of $O(\log n)$ in $G_t$ compared to the distance in $G'_t$. 

\end{proof}

%

\subsection{Spectral Analysis}

%

We derive bounds on the second smallest eigenvalue $\lambda$ which is closely related
to properties such as mixing time, conductance etc. While it is directly difficult to derive bounds on $\lambda$, we use our bounds on edge expansion 
and the Cheeger's inequality to do so.

We need the following simple inequality which relates the Cheeger constant 
$\phi(G)$ and the edge expansion $h(G)$ of a graph $G$ which follows from their respective definitions. We use $d_{max}(G)$ and $d_{min}(G)$ to denote the maximum and minimum node degrees in $G$.

\begin{eqnarray}
\label{eq:ch-ed}
\frac{h(G)}{d_{max}(G)} \leq \phi(G) \leq \frac{h(G)}{d_{min}(G)}.
\end{eqnarray}

\begin{proof}
By Cheeger's inequality  and by inequality \ref{eq:ch-ed} we have,
$$\lambda(G_t) \geq \frac{\phi(G_t)^2}{2} \geq \frac{1}{2}\left(\frac{h(G_t)}{d_{max}(G_t)}\right)^2$$

By Lemma \ref{lemma: Expansion}, we have,
$h(G_t) \geq min(c', h(G'_t))$, for  some $c' \geq 1$.

So we have two cases:

Case 1: $h(G_t) \geq h(G'_t)$.  
By  using the other half of Cheeger's inequality, and inequality \ref{eq:ch-ed},
and Lemma \ref{lemma: degree} we have:

\begin{eqnarray*}
\lambda(G_t)  & \geq  & \frac{1}{2}\left(\frac{h(G'_t)}{d_{max}(G_t)}\right)^2 \\
  & \geq & \frac{1}{2}\left(\frac{\lambda(G'_t)d_{min}(G'_t)}{2d_{max}(G_t)}\right)^2 \\
 & \geq & \frac{\lambda(G'_t)^2}{8(\kappa)^2} \frac{d_{min}(G'_t)}{(d_{max}(G'_t))^2} \\
 & = & \Omega\left(\lambda(G'_t)^2 \frac{d_{min}(G'_t)}{(\kappa)^2(d_{max}(G'_t))^2}\right).
\end{eqnarray*}

Case 2: $h(G_t) \geq 1$:

This directly gives:


\begin{eqnarray*}
\lambda(G_t) & \geq & \frac{1}{2}\left(\frac{1}{d_{max}(G_t)}\right)^2 \\ 
&  \geq & \Omega\left(\frac{1}{(d_{max}(G_t))^2}\right)\\
& \geq & \Omega \left(\frac{1}{(\kappa d_{max}(G'_t))^2}\right).
\end{eqnarray*}
\end{proof}

\section{Distributed Implementation of  \textbf{\it Xheal}: Time and Message Complexity Analysis} 
\label{sec: distributed}
We now discuss how to efficiently implement \Xh. 
A key task in \Xh involves the distributed construction and maintenance (under  insertion and deletion) of a regular expander.  We use a randomized construction of Law and Siu \cite{lawsiu} that is described below. The expander graphs of \cite{lawsiu} are formed by constructing a
class of regular graphs  called {\em H-graphs}. An H-graph is
a $2d$-regular multigraph in which the set of edges is composed
of $d$ Hamilton cycles. A random graph from this class can be constructed (cf. Theorem below)  by picking $d$  Hamilton cycles independently and uniformly at random among all possible Hamilton cycles on the set of $z \geq 3$ vertices, and taking the union of these Hamilton cycles. This construction yields a random regular graph (henceforth called as a {\em random H-graph}) that 
that can be shown to be an expander with high probability (cf. Theorem \ref{th:friedman}). The construction can be accomplished incrementally
as follows.

Let the neighbors of a node $u$ be labeled
as\\
 $nbr(u)_{-1}, nbr(u)_1, nbr(u)_{-2}, . . . , nbr(u)_{-d}, nbr(u)_d$. For each $i$, 
$nbr(u)_{-i}$ and
$nbr(u)_i$ denote a node's predecessor and successor on the $i$th
Hamilton cycle (which will be referred to as the level-$i$ cycle).
We start with 3 nodes, because there is only one possible
H-graph of size 3. 

1.{\bf INSERT(u):} A new node $u$  will be inserted into cycle $i$ between
node $v_i$ and node $nbr(v_i)_i$ for randomly chosen $v_i$, for
$i = 1, \dots, d$. 

2. {\bf DELETE(u):} An existing node $u$ gets deleted by simply removing it
and connecting $nbr(u)_{i}$ and $nbr(u)_{-i}$, for
$i = 1, \dots, d$.

Law and Siu prove the following theorem  (modified here for our purposes) that is used in \Xh:

\begin{theorem}[\cite{lawsiu}]
\label{th:H-graph} Let $H_0, H_1, H_2, \dots$ be a sequence of $H$-graphs, each of size at least 3.
Let $H_0$ be a  random $H$-graph of size $n$ and
let $H_{i+1}$ be formed from $H_i$ by either INSERT or DELETE operation as above.  Then $H_i$ is a random $H$-graph for all $i \geq 0$. 
\end{theorem}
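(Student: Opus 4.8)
The plan is to prove the statement by induction on $i$, reducing each inductive step to a single-Hamilton-cycle claim. Since $H_0$ is a random $H$-graph by hypothesis, it suffices to show that each of the two operations, INSERT and DELETE, maps the random-$H$-graph distribution to the random-$H$-graph distribution on the appropriate vertex set. Recall that a random $H$-graph is the union of $d$ Hamilton cycles drawn independently and uniformly at random; both operations act on each of the $d$ cycles separately, using independent random choices $v_1,\dots,v_d$ in the case of INSERT. Hence, provided I can show that a single operation sends a uniformly random Hamilton cycle to a uniformly random Hamilton cycle on the new vertex set, independence across the $d$ cycles is preserved (each cycle is transformed using only its own structure and its own independent randomness), so the resulting graph is again a union of $d$ independent uniform Hamilton cycles, i.e.\ a random $H$-graph.

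For a single cycle, I would argue by counting preimages, using that the number of (undirected) Hamilton cycles on $m$ labeled vertices is $(m-1)!/2$. For DELETE of a fixed vertex $u$ from a cycle on $z$ vertices, the operation is the deterministic map that removes $u$ and joins its two neighbors; I would show that every Hamilton cycle $C'$ on the remaining $z-1$ vertices has exactly $z-1$ preimages (one for each edge of $C'$ into which $u$ could be reinserted). Since every target has the same number of preimages and the source distribution is uniform, the pushforward is uniform over the $(z-2)!/2$ cycles. For INSERT of a new vertex $u$ into a cycle on $z$ vertices, choosing $v_i$ uniformly among the $z$ vertices and splicing $u$ between $v_i$ and its successor $nbr(v_i)_i$ amounts to choosing one of the $z$ edges uniformly and subdividing it; I would show each target cycle on $z+1$ vertices arises from a unique (cycle, edge) pair, so its probability is $\frac{2}{(z-1)!}\cdot\frac{1}{z}=\frac{2}{z!}$, exactly the uniform probability over the $z!/2$ cycles on $z+1$ vertices.

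Combining the two single-cycle claims with the independence argument shows that one INSERT or DELETE step preserves the random-$H$-graph distribution, and the theorem then follows by induction on $i$. I would also note that the per-cycle counting is valid for \emph{any} fixed choice of the deleted or inserted vertex, so the conclusion holds for an arbitrary operation sequence chosen by an adversary that is oblivious to the algorithm's randomness, consistent with the model of Section~\ref{sec: Xmodel}.

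The main obstacle I anticipate is getting the single-cycle bookkeeping exactly right, in particular handling the undirected nature of the Hamilton cycles (the factor of $1/2$ in the count, and the orientation implicit in the predecessor/successor labeling $nbr(u)_{\pm i}$) so that the preimage counts and the probabilities line up precisely with the uniform distribution. Once this bijective accounting is pinned down, the lift from a single cycle to the full $H$-graph via independence of the $d$ cycles and of the choices $v_1,\dots,v_d$ is routine.
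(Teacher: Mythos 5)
Your proposal is correct, but note that the paper itself gives no proof of this statement: it is imported verbatim (``modified here for our purposes'') from Law and Siu \cite{lawsiu}, so there is no in-paper argument to compare against. Your inductive reduction to a single-cycle claim, followed by the preimage/bijection counting, is essentially the standard argument and is the one underlying the cited source. The bookkeeping you flag does check out: for DELETE from $z$ to $z-1$ vertices each target cycle has exactly $z-1$ preimages and $\bigl((z-2)!/2\bigr)(z-1)=(z-1)!/2$, while for INSERT the orientation induced by the successor labeling makes the uniform choice of $v_i$ equivalent to a uniform choice among the $z$ edges, giving each target probability $\frac{2}{(z-1)!}\cdot\frac{1}{z}=\frac{2}{z!}$, which is uniform over the $z!/2$ cycles on $z+1$ vertices. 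Two small points worth making explicit if you write this up: (i) the argument should be carried out on the \emph{tuple} of $d$ cycles (the $H$-graph is a multigraph, so the map from tuples to graphs is not injective, but uniformity of the tuple is what the theorem's definition of ``random $H$-graph'' requires); and (ii) the size restriction in the theorem statement (every $H_i$ has at least $3$ vertices) is what keeps the DELETE map well-defined, since deleting from a triangle would not yield a Hamilton cycle.
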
 

\begin{theorem}[\cite{friedman, lawsiu}]
\label{th:friedman}
A random $n$-node $2d$-regular $H$-graph is an expander (with edge expansion
$\Omega(d)$)  with probability at least $1 - O(n^{-p})$ where $p$ depends on $d$.
\end{theorem}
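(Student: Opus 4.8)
The plan is to establish edge expansion $\Omega(d)$ by the first-moment method: a union bound over all vertex subsets that could witness a small cut, exploiting the fact that the $d$ Hamilton cycles are drawn independently. (I note that the cited result of Friedman in fact proves the stronger statement that the second eigenvalue is close to the Ramanujan bound; for the weaker edge-expansion claim needed here, the elementary counting argument below suffices.) Recall that edge expansion $\Omega(d)$ amounts to showing that, with high probability, every $S$ with $1 \le |S| \le n/2$ satisfies $|E(S,\bar{S})| \ge c\,d\,|S|$ for a suitable constant $c>0$.

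First I would decompose the cut across the union into its contributions from the individual cycles. Writing $X_i$ for the number of level-$i$ cycle edges that cross $(S,\bar{S})$, we have $|E(S,\bar{S})| = \sum_{i=1}^{d} X_i$ (coincidences only merge parallel edges and hence can only help), and the $X_i$ are independent and identically distributed because the cycles are chosen independently and uniformly. In a single random Hamilton cycle each of the $n$ edges marginally joins a uniformly random unordered pair of distinct vertices, so $E[X_i] = \frac{2s(n-s)}{n-1} \ge s$ for $s \le n/2$; hence $E\big[|E(S,\bar{S})|\big] \ge d\,s$, and the task reduces to a lower-tail concentration bound plus a union bound.

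Next I would bound, for a fixed $S$ of size $s$, the probability $P\big[\sum_i X_i < c\,d\,s\big]$. The geometric content is that $X_i = 2 b_i$, where $b_i$ is the number of maximal arcs into which $S$ splits the $i$th cycle, so a small cut forces few arcs — an atypical event. Counting the cyclic arrangements in which $S$ splits into exactly $b$ arcs shows that $P[X_i = 2b]$ is governed by the factor $\binom{s-1}{b-1}\binom{n-s-1}{b-1}$ (normalised by the $\binom{n}{s}$ placements of $S$), so few arcs is strongly unlikely; a Chernoff/MGF bound (or direct enumeration over the $d$ independent copies) then yields a lower-tail probability that decays exponentially in $d$, which is precisely what produces the exponent $p=p(d)$ in the final $O(n^{-p})$. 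Finally I would multiply by the number $\binom{n}{s} \le (en/s)^{s}$ of candidate sets and sum over $s = 1, \dots, n/2$: choosing $c$ small enough and $d$ above a threshold makes the exponential-in-$d$ decay dominate the entropy factor $s\log(n/s)$ term by term, so the sum is geometric and bounded by $O(n^{-p})$.

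The main obstacle is the per-cycle tail estimate, because within a single Hamilton cycle the crossing edges are \emph{not} independent — the cycle is a random cyclic permutation — so one cannot apply Chernoff edge by edge and must instead control the arc-count distribution directly. Moreover the estimate must hold uniformly over the whole range of $s$: the two extreme regimes, tiny sets $s=O(1)$ (few candidate subsets, but a contiguous $S$ has minimal boundary) and nearly balanced sets $s \approx n/2$ (exponentially many candidate subsets), stress the argument in different ways, and producing a single clean bound in $s$ and $b$ that closes the union bound in both regimes is the technical crux.
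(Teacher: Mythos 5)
The paper does not prove this statement at all: it is imported verbatim from Friedman and Law--Siu, so there is no internal proof to match yours against. Your proposal is a genuinely different (and essentially correct) route from the one behind the citation. Friedman's argument is spectral --- a trace-method computation showing that the second eigenvalue of a union of $d$ random permutations/Hamilton cycles is close to $2\sqrt{2d-1}$ with probability $1-O(n^{-p})$, from which edge expansion $\Omega(d)$ follows via the discrete Cheeger/Alon--Milman inequality or the expander mixing lemma. Your first-moment argument --- decomposing the cut cycle by cycle, observing $X_i = 2b_i$ with the arc-count law $P[b_i=b] \propto \frac{n}{b}\binom{s-1}{b-1}\binom{n-s-1}{b-1}/\binom{n}{s}$, taking an MGF bound over the $d$ independent cycles, and closing the union bound against the entropy factor $(en/s)^s$ --- is the standard elementary proof that such unions are edge expanders (and is closer in spirit to Law--Siu's own presentation). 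It buys simplicity and directly yields the $1-O(n^{-p})$ form with $p$ growing in $d$, at the cost of not giving the spectral gap, which the paper separately relies on when it remarks that $\lambda$ for these graphs is near-optimal; your proof would not recover that remark. Two small points to tighten: your parenthetical that edge coincidences ``can only help'' is backwards --- merging parallel edges can only \emph{decrease} the cut --- but this is harmless because the $H$-graph is defined as a multigraph, so $|E(S,\bar{S})| = \sum_i X_i$ exactly and the simple-graph version needs the separate (hand-waved in the paper) argument that repeated edges are rare; and for very small $s$ you should note that $b_i \ge 1$ always, so the cut is at least $2d$, which disposes of all $s \le 2/c$ without any probabilistic estimate.
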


Note that in the above theorem, the probability guarantee can be made as close to 1 as possible, by making $d$ large enough. 
Also it is known that $\lambda$, the second smallest eigenvalue, for these random graphs is close to the best possible~\cite{friedman}. Another point to note that although the above construction can yield a multigraph, it can be shown that similar high probabilistic guarantees hold in case we make the multi-edges simple, by making $d$ large enough. Hence we will assume that the constructed expander graphs are simple.
 
We next show how \Xh algorithm is implemented and analyze the time
and message complexity per node deletion.
We note that insertion of a node by adversary involves almost no work from \Xh.
The adversary simply inserts a node and its incident edges (to existing nodes). \Xh simply colors these inserted edges as black. Hence we focus on the steps
taken by \Xh under deletion of a node by the adversary.
 First we state the following lower bound on the amortized message complexity
 for deletions which is easy to see in our model (cf. Section 
 \ref{sec: Xmodel}). Our algorithm's complexity will be within a logarithmic factor of this bound.

\begin{lemma}
\label{le:del}
In the worst case, any healing algorithm needs $\Theta(deg(v))$ messages to repair
upon deletion of a node $v$, where $deg(v)$ is the  degree of $v$ in $G'_t$ (i.e., the black-degree of $v$). Furthermore, if we there are $p$ deletions, $v_1, v_2, \dots, v_p$, then the amortized 
cost is $A(p) = (1/p)\sum_{i=1}^p \Theta(deg(v_i))$ which is the best possible.
\end{lemma}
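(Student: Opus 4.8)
The plan is to prove a per-deletion lower bound of $\Omega(deg(v))$ messages through an adversarial instance, and then to obtain the amortized statement by summing over the $p$ deletions. First I would fix the worst-case topology: let the adversary present a \emph{star}, with $v$ at the center and its $deg(v)$ neighbors as leaves, so that every leaf is joined to the rest of the network only through $v$. When the adversary deletes $v$, the model (Section~\ref{sec: Xmodel}) informs exactly the neighbors of $v$, and the induced subgraph on $NBR(v)$ becomes edgeless; the $deg(v)$ former leaves are now pairwise disconnected, isolated vertices. Since any healing algorithm must keep the network connected, the recovery phase must add edges that place all $deg(v)$ of these vertices into a single connected component.

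The heart of the argument is a counting bound on messages. After the deletion each former leaf has degree $0$, so to rejoin the connected component every such vertex must acquire at least one new incident edge during recovery, and for a vertex to acquire a new neighbor it must send or receive at least one recovery message (an edge cannot be established with both endpoints aware of it without some communication across the pair, which is exactly why messages in our model carry vertex names). Counting incidences between the $deg(v)$ vertices and the recovery messages, each vertex contributes at least one required incidence while each message touches at most two vertices (its sender and receiver); hence at least $deg(v)/2 = \Omega(deg(v))$ messages are sent. Combined with the trivial fact that reconnecting $NBR(v)$ (e.g., along a single cycle, which preserves connectivity) costs only $O(deg(v))$ messages, this pins the worst-case cost at $\Theta(deg(v))$.

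For the amortized statement, I would apply the per-deletion bound to a sequence of $p$ deletions $v_1,\dots,v_p$, where the adversary arranges, by presenting appropriate star-like gadgets, that deleting each $v_i$ forces $\Omega(deg(v_i))$ recovery messages against \emph{any} algorithm. Summing, the total message cost is $\Omega\!\left(\sum_{i=1}^p deg(v_i)\right)$, so dividing by $p$ yields an amortized lower bound of $\Omega\!\left(\tfrac{1}{p}\sum_{i=1}^p deg(v_i)\right) = \Omega(A(p))$; since the simple cyclic reconnection matches this up to constants, $A(p)$ is best possible.

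The step I expect to be the main obstacle is making the message-counting argument robust against the full power of the model rather than a naive communication pattern. In particular I must be careful about (i) whether the initial act of ``informing the neighbors of the deletion'' is charged as recovery communication or supplied for free, and (ii) the pre-processing and neighbors-of-neighbors (NoN) information the nodes already hold, which could in principle let a vertex ``decide'' on new edges with no round trip. The argument must therefore rest not on any vertex \emph{learning} whom to connect to, but on the irreducible fact that an edge has two endpoints and a still-isolated vertex cannot become non-isolated without at least one message crossing to or from it; this is what keeps the $\Omega(deg(v))$ bound valid regardless of whatever global information the nodes were given in advance.
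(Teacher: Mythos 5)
The paper itself offers no proof of Lemma~\ref{le:del}; it is asserted as ``easy to see in our model,'' so your write-up is supplying an argument the authors left implicit rather than diverging from one they gave. Your argument -- a star (or caterpillar of stars, for the amortized case) as the hard instance, plus an incidence count showing that each newly isolated leaf must be the endpoint of at least one recovery message, each message covering at most two leaves -- is the natural instantiation and is essentially correct; it also matches the paper's own cost accounting, which charges $O(1)$ messages to inform each endpoint of each new edge. Two caveats are worth recording. First, the crux you correctly identified as the main obstacle is genuinely the weak point of the model, not just of your proof: Figure~\ref{algo:model-2} says a node ``may insert edges joining it to any other nodes as desired,'' and since all leaves of the star share identical NoN information after preprocessing, a sufficiently literal reading would let both endpoints deterministically compute the same canonical reconnection and add their edges with zero communication, which would falsify the lemma. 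Your resolution -- that an edge cannot come into existence with both endpoints aware of it unless at least one message crosses between them -- is the reading the paper implicitly adopts (it is the reading under which its own upper-bound accounting makes sense), so your proof is correct in the intended model, but you should state that assumption explicitly as part of the model rather than as a fact to be argued. Second, in the amortized claim the matching upper bound deserves one more sentence: naive cyclic reconnection applied repeatedly can inflate the \emph{current} degree of a node well beyond its black degree $deg_{G'_t}(v_i)$, so ``cyclic reconnection matches this up to constants'' needs either a degree-bounding argument or an appeal to an algorithm (such as the paper's, or Forgiving Graph) whose current degrees stay within a constant factor of the black degrees; the lower-bound half, by contrast, survives the algorithm doing early work, since your incidence count over the disjoint leaf sets is oblivious to \emph{when} the messages are sent.
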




\begin{theorem}
Xheal can be implemented to run in $O(\log n)$ rounds (per deletion).
The amortized message complexity over $p$ deletions is $O(\kappa\log n A(p))$ on average  where $n$ is the
number of nodes in the network (at this timestep), $\kappa$ is the degree of the 
expander used in the construction, and $A(p)$ is defined as in Lemma \ref{le:del}.  
\end{theorem}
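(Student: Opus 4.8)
The plan is to establish the two claims---the per-deletion bound of $O(\log n)$ rounds and the amortized message bound of $O(\kappa \log n\, A(p))$---separately, using two facts already available: every $\kappa$-regular expander produced by the Law--Siu construction has edge expansion $\Omega(\kappa)$ (Theorem~\ref{th:friedman}) and therefore diameter $O(\log n)$; and we operate in the \textsc{LOCAL} model, so messages may be arbitrarily large, local computation is free, and neighbors-of-neighbors information is pre-shared.

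First, for the round complexity I would walk through each subroutine called by Algorithm~\ref{algo: repairbyexpander} (\textsc{FixPrimary}, \textsc{MakeSecondary}, \textsc{FixSecondary}, \textsc{PickFreeNode}, \textsc{MakeCloud}) and argue that each finishes in $O(\log n)$ rounds. The key structural observation is that every primary and secondary cloud touched by one deletion of a node $v$ previously contained $v$, so the union of these clouds is a connected region of diameter $O(\log n)$ (each cloud is an expander of logarithmic diameter, glued together at $v$'s former neighborhood). Any (re)construction of a random $H$-graph inside such a region can be done by a gather--compute--broadcast scheme: a leader convergecasts the local topology in $O(\log n)$ rounds, computes the new $d$ Hamilton cycles using unbounded local computation, and broadcasts the updated predecessor/successor pointers in $O(\log n)$ rounds; routing the new edge requests to their endpoints happens in parallel and also costs $O(\log n)$ rounds since the region has logarithmic diameter. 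The incremental INSERT/DELETE of Theorem~\ref{th:H-graph} are cheaper still, touching only $O(\kappa)$ nodes, and \textsc{PickFreeNode} merely probes adjacent clouds. As only $O(1)$ such phases occur per deletion, the round bound follows.

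Second, for the message complexity I would first bound the direct (non-combine) cost of a single deletion and then dispatch the expensive combine step by amortization. A deletion of $v$ with black-degree $d'(v)=\deg_{G'_t}(v)$ affects $O(d'(v))$ clouds: by the degree bound (Lemma~\ref{lemma: degree}) $v$ has $\kappa$ incident edges in each cloud, so the number of distinct color classes at $v$, plus its black neighbors treated as singleton clouds, is $O(\deg_{G_t}(v)/\kappa)=O(d'(v))$. Fixing each cloud incrementally is $O(\kappa)$ messages, and building the secondary expander over the $O(d'(v))$ bridge nodes introduces $O(\kappa\, d'(v))$ new edges; since each new edge joins two nodes that may be $\Theta(\log n)$ apart inside the cloud, establishing it costs $O(\log n)$ messages of routing, for a total of $O(\kappa \log n\, d'(v))$ messages excluding any combine. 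Summed over all $p$ deletions this is $O\!\left(\kappa \log n \sum_i d'(v_i)\right)=O(\kappa \log n\, p\, A(p))$, i.e. $O(\kappa \log n\, A(p))$ on average, exactly the target.

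The main obstacle---and the step I would treat most carefully---is amortizing the combine operation, which rebuilds a single expander over all $M$ nodes of the $j$ merged clouds and costs $O(\kappa M \log n)$ messages, far exceeding a single deletion's budget in the worst case. Here I would use a charging/potential argument with potential equal to the number of non-free (bridge) nodes. Each \textsc{MakeSecondary} turns one free node per cloud into a bridge node (raising the potential), while a combine fires exactly when free nodes are exhausted and resets all of them to free (releasing the potential). Since a combine over $M$ nodes is triggered only after $\Omega(M)$ bridge nodes have been created by earlier deletions, I would charge its $O(\kappa M \log n)$ cost back, at $O(\kappa \log n)$ per created bridge node, to those deletions; as a single deletion of $v$ creates $O(d'(v))$ bridge nodes, its total combine charge is $O(\kappa \log n\, d'(v))$, the same order as its direct cost. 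Thus the combine cost is absorbed and the overall amortized bound stands. The two delicate points to verify are that the region rebuilt by a combine still has diameter $O(\log n)$ (so no more than the claimed logarithmic routing factor appears), and that the free-node-sharing rule of Case~2.1 indeed guarantees $\Omega(M)$ bridge-node creations before the free supply runs out; both hinge on the accounting of free versus non-free nodes set up in the algorithm.
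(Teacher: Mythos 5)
Your proposal is correct and follows essentially the same route as the paper's proof sketch: leader-based gather--compute--broadcast of the $H$-graph within the $O(\log n)$-diameter region formed by the affected clouds for the round bound, a direct per-deletion message cost of $O(\kappa\log n)$ times the black degree, and amortization of the expensive combine over the earlier deletions that consumed the free nodes. Your potential-function charging (per bridge-node created) is just a cleaner formalization of the paper's ``charge the combine to the $\Omega(\sum_i |C_i|)$ cheaper prior deletions'' argument, and the only details the paper adds that you omit are the leader/vice-leader invariant with its expected $O(1)$ re-election cost and the periodic $H$-graph rebuild to preserve the probabilistic expansion guarantee.
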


\begin{proof} (Sketch)
We first note that the healing operations will be initiated by the neighbors of
the deleted node. We also note that primary and secondary expander clouds 
can be identified by the color of their edges (cf. Algorithm in Section\ref{sec:algorithm}.)

{\em Case 1:} This involves constructing a (primary) expander cloud among the neighboring
nodes $N(v)$ of the deleted node $v$. Note that $|N(v)| = deg(v)$, where $deg(v)$ is the black-degree of $v$.    Since  each node knows neighbor of neighbor's (NoN) addresses, it is akin
to working on a complete graph over $N(v)$. We first elect a leader  among   $N(v)$: a {\em random} node (which is useful later) among $N(v)$ is chosen as a leader. This can be done, for example, by using the Nearest Neighbor Tree (NNT) algorithm of \cite{khan-tcs}. This takes $O(\log |N(v)|)$ time and $O(|N(v)|\log |N(v)|)$ messages. The leader  then (locally) constructs a random $\kappa$-regular $H$-graph over $N(v)$  and informs each node in $N(v)$ (directly, since its address is known)
of their respective edges. The total messages needed to inform the nodes is $O(\kappa |N(v)|)$, since that is the total number of edges. A neighbor of the leader in the expander graph is also elected as a vice-leader. This  can be implemented in $O(1)$ time. Hence, overall this case takes $O(\log |N(v)|) = O(\log deg(v)) = O(\log n)$ time and $O(\kappa deg(v)\log deg(v))$ messages.

In particular, the following invariants will be maintained
with respect to  every expander (primary or secondary) cloud: (a) Every node in the cloud
will have a leader (randomly chosen among the nodes) associated with it ; 
(b) every node in the cloud knows the address of the leader and can communicate with it directly (in constant time); and (c) the leader knows the addresses of all other nodes in the cloud; (d) one neighbor of the leader in the cloud will be designated vice-leader which will know everything the leader knows and will take action in case
the leader is deleted. Note that this invariant is maintained in Case 1. We will show 
that it is also maintained in Case 2 below.

Case 2 (Cases 2.1 and 2.2 of Xheal):  We have to implement three main operations in these cases.
They are: \\
{\bf (a)} Reconstructing an expander cloud (primary or secondary) on deletion of a node $v$: 
Let $C$ be the primary (or secondary) cloud that loses $v$.
The node is removed  according to the DELETE operation of $H$-graph.  This takes $O(1)$
time and $O(\kappa)$ messages. If $v$ belongs to $j$ primary clouds then the time is still $O(1)$ while the total message complexity is $O(j\kappa)$. 
For $v$ to belong to $j$ primary clouds its black degree should be at least $j$.
Also $v$ can belong to at most one secondary cloud. Hence the cost is at most
$O(\kappa)$ times the black degree as needed.
 If the deleted node happens to be  the leader of the (primary) cloud then  a new 
{\em random} leader  is chosen (by the vice-leader) and inform the rest of the nodes --- this will take $O(|C|)$ messages and $O(1)$ time, where $|C|$ is the number of nodes in the cloud.
Since the adversary does not know the random choices made by the algorithm,
the probability that it deletes a leader in a step is $1/|C|$ and thus the expected message complexity is $O((1/|C|) |C| = O(1)$. (Note that a new vice-leader, a neighbor of the new leader will be chosen if necessary.)

{\bf (b)} Forming and fixing primary and secondary expander clouds (if there are
enough free nodes): 
Let the deleted node belong to primary clouds $C_1, \dots, C_j$ and possibly
a secondary cloud $F$ that connects a subset of these $j$ clouds (and possibly other unaffected primary clouds). First, each of the clouds are reconstructed
as in (a) above. This operation arises only if we have at least $j$ free nodes,
i.e., nodes that are not associated with any secondary cloud. We now mention
how free nodes are found. To check if there are enough free nodes among the $j$
clouds, we check the respective leaders. The leader always maintain a list of all free nodes in its cloud. Thus if a node becomes non-free during a repair
it informs the leader (in constant time) which removes it from the list. Thus the neighbors of the deleted node can request the leaders of their respective clouds to find free nodes. Hence finding free nodes takes time $O(1)$ and needs $O(j)$ messages. The free nodes are then inserted to
form the secondary cloud. We distinguish two situations with respect
to formation of a secondary cloud:
(i) The secondary cloud is formed for the first time (i.e., a new secondary cloud among the primary clouds). In this case, a leader of one of the associated
primary cloud is elected to construct the secondary expander. This leader
then gets the free nodes from the respective primary clouds, locally constructs
a $\kappa$-regular expander and informs it to the respective free nodes of each primary cloud. This is similar to the construction of a primary cloud as in (a).
The time and message complexity is also bounded as in (a).

(ii)The secondary cloud is already present, merely, a new free node is added.
In this case, the new node is inserted to the secondary cloud by using the INSERT operation of $H$-graph. This takes $O(1)$ time and $O(1)$ messages,
since INSERT can be implemented by querying the leader.

{\bf (c)} Combining many primary expander clouds into one primary expander cloud (if there are not enough free nodes):
This is a costly operation which we seek to amortize over many deletions.
First, we compute the cost of combining clouds. Let $C_1, \dots, C_j$ are the clouds that need to be combined into one cloud $C$.  This is done by first electing a leader  over all the nodes in the clouds $C_1, \dots, C_j$. 
Note that the distance between any two nodes among these clouds is $O(\log n)$,
since all the clouds had a common node (the deleted node) and each cloud is an expander (also note that the neighbors of the deleted nodes maintain
connectivity during the leader election and subsequent repair process). A BFS tree is then constructed subsequently over the nodes of the $j$ clouds with the leader as the root. The leader then collects all the addresses of all the nodes in the
clouds (via the BFS tree) and locally constructs a $H$-graph and broadcasts it to all the other nodes in the cloud. The leader's address is also informed
to all the other nodes in the cloud. Thus the invariant specified in Case 1 is maintained. The total time needed is $O(\log n)$ time and the total number
of messages needed is  $O(\kappa\sum_{i=1}^j |C_i|)\log n$, since each node (other than the leader) sends $O(1)$ number of messages over $O(\log n)$ hops, and the leader sends $O(\sum_{i=1}^j |C_i|)\log n$. However, note that the costly operation of combining is triggered by having less than $j$ free nodes.
This implies that there must been at least $\Omega(\sum_{i=1}^j |C_i|)$
prior deletions that had enough free nodes and hence involved no combining.
Thus, we can amortize the total cost of the combining cost over these ``cheaper'' prior deletions. Hence the amortized cost is
$$\frac{O(\kappa\sum_{i=1}^j |C_i|)\log n }{\Omega(\sum_{i=1}^j |C_i|)} = O(\kappa\log n).$$

Finally, we say how the probabilistic guarantee on the $H$-graph can be maintained.
 The implementation above uses a $\kappa$-regular random $H$-graph in the construction of an expander cloud. By  theorem \ref{th:friedman},  $\kappa$ can be chosen large enough to guarantee the probabilistic requirement needed. For example,
choosing $\kappa = \Theta(\log n)$, then high probability (with respect to the size
of the network) is guaranteed (this assumes that nodes know an upper bound on the size of the network). Furthermore, if there are $f$  deletions, by union bound,
the probability that it is not an expander increases by up to a factor of $f$.
To address this, we reconstruct the $H$-graph after any cloud has lost half of
its nodes; note that the cost of this reconstruction can be amortized over the deletions to obtain the same bounds as claimed.
\end{proof}




%
%
%


\section{Conclusion}


We have presented an efficient,  distributed algorithm  that withstands repeated adversarial node insertions and deletions by adding a small number of new edges after each deletion. It maintains key global invariants of the network while doing only localized changes and using only local information. The global invariants it maintains are as follows. Firstly, assuming  the initial network was connected, the network stays connected. Secondly, the (edge) expansion of the network is at least as good as the expansion would have been without any adversarial  deletion, or is at least a constant. Thirdly,  the distance between any pair of nodes never increases by more than a $O(\log n)$ multiplicative factor than what the distance would be without the adversarial deletions.  Lastly, the above global invariants are achieved while not allowing the degree of any node to increase by more than a small multiplicative factor.

The work can be improved  in several ways in similar models. Can we improve the present algorithm to allow smaller messages and lower congestion? Can we efficiently find new routes to replace the routes damaged by the deletions? Can we  design self-healing algorithms that are also load balanced? Can we reach a theoretical characterization of  what network properties are amenable to self-healing, especially, global properties which can be maintained by local changes? What about combinations of desired network invariants? We can also extend the work to different models and domains. We can look at designing algorithms for less flexible networks such as sensor networks, explore healing with non-local edges. We can also look beyond graphs to rewiring and self-healing circuits where it is gates that fail.









\end{document}